\newcommand{\sidecomment}[1]{~~\vspace*{-1em}\tcp{#1}}
\newcommand{\comments}[1]{{\color{blue}#1}}
\begin{document}
\title{Design Patterns in Beeping Algorithms:\\ Examples, Emulation, and Analysis\thanks{A preliminary version of this paper was presented at OPODIS 2016. This research has been supported by ANR projects DESCARTES (ANR-16-CE40-0023) and ESTATE (ANR-16-CE25-0009-03).}}
\author{A. Casteigts, Y. M\'etivier, J.M. Robson and A. Zemmari\\
\institute{
Universit\'e de Bordeaux - 
Bordeaux INP\\
  LaBRI UMR CNRS 5800\\ 351 cours de la Lib\'eration, 33405 Talence\medskip\\
  \{acasteig, metivier, robson, zemmari\}@labri.fr 
}
} \date{ }

\maketitle

\begin{abstract}
We consider networks of processes which interact using beeps. In the basic model defined by Cornejo and Kuhn~(2010), processes can choose in each round either to beep or to listen. Those who beep are unable to detect simultaneous beeps. Those who listen can only distinguish between silence and the presence of at least one beep. We refer to this model as $BL$ (beep or listen). Stronger models exist where the nodes can detect collision while they are beeping ($B_{cd}L$), listening ($BL_{cd}$), or both ($B_{cd}L_{cd}$). Beeping models are weak in essence and even simple tasks are difficult or unfeasible within such models.
 
We present a set of generic building blocks ({\em design patterns}) which seem to occur frequently in the design of beeping algorithms.
They include {\em multi-slot phases}: the fact of dividing the main loop into a number of specialised slots; {\em exclusive beeps}: having a single node beep at a time in a neighbourhood (within one or two hops); {\em adaptive probability}: increasing or decreasing the probability of beeping to produce more exclusive beeps; {\em internal} (resp. {\em peripheral}) collision detection: for detecting collision while beeping (resp. listening); and {\em emulation} of collision detection when it is not available as a primitive.
Based on these patterns, we provide algorithms for a number of basic problems, including colouring, 2-hop colouring, degree computation, 2-hop MIS, and collision detection (in $BL$). The patterns make it possible to formulate these algorithms in a rather concise and elegant way. Their analyses are more technical;
one of them significantly reduces the constant factor in the analysis of the best known MIS algorithm by Jeavons et al.~(2016).
Finally, inspired by a technique from Afek et al.~(2013), our last contribution is to show that {\em any} Las Vegas algorithm relying on collision detection can be transposed into a Monte Carlo algorithm without collision detection, through emulation of this primitive at the cost of a logarithmic slowdown. We prove that this is optimal by showing a matching lower bound.

\end{abstract}
{\bf Keywords.} Beeping models, Design patterns, Collision detection, 
Colouring, 2-hop colouring, Degree computation, Emulation.

\section{Introduction}
\label{sec:intro}
The area of distributed computing is often concerned with the choice of assumptions. These assumptions may relate to the structure of the network (e.g. trees, rings, or complete graphs) or to the knowledge available to the nodes (network size, identifiers, port numbering). An important family of assumption is the size of the messages, which may range from unbounded to constant.
Clearly, given a problem, a natural goal is to reduce assumptions as much as possible. Thus, once a solution is found in some strong model, the community strives to solve it in weaker models. In a recent series of works~\cite{Cornejo10,Schneider10,Afek13,HuangM13,JSX16,GN15}, new models have been explored which are among the weakest imaginable. They are called {\em beeping models}. In these models, the only communication capability offered to the nodes is to {\em beep} or to {\em listen}. Several variants exist; in~\cite{Cornejo10}, a node that beeps is unable to detect whether other nodes have beeped simultaneously. When listening, it can distinguish between silence or the presence of at least one beep, but it cannot distinguish between one and several beeps. In Section~6 of~\cite{Afek13}, beeping nodes can detect whether other nodes are beeping simultaneously. In~\cite{Schneider10} and Section~4 of~\cite{Afek13}, yet another variant is considered where listening nodes can tell the difference between silence, one beep, and several beeps.

 In this paper, we denote the ability to detect collision while beeping (internal collision) by $B_{cd}$ and that of detecting collision while listening (peripheral collision) by $L_{cd}$. The absence of such ability is denoted by $B$ and $L$, respectively.
The existing models can be reformulated using the cartesian product of these capabilities. The basic model introduced by Cornejo and Kuhn in~\cite{Cornejo10} is $BL$; the model considered by Afek et al. in~\cite{Afek13} (Section~6) and Jeavons et al. in~\cite{JSX16} is $B_{cd}L$; and the model considered in~\cite{Schneider10} and in Section~4 of~\cite{Afek13} is $BL_{cd}$. In~\cite{CMRZ16c}, the authors consider the $B_{cd}L_{cd}$ model along with all others and their impact on the counting problem.
While some of these variants are stronger than others, all of them remain weak in essence. Beyond theory, these models account for real-world applications and phenomena. For instance, they reflect the features of a network at the lowest level (physical or MAC layer). They also reflect certain intercellular communication patterns in biological organisms~\cite{Collier96,Afek13,NB15}.

\subsection{Contributions}
Our first contribution is to identify generic building blocks
 ({\em design patterns}) which seem to occur often in the design of beeping algorithms.
 Based on these patterns, we present a number of algorithms for several graph problems, whose concise and simple expression illustrates the benefits of using design patterns. 
Then, we generalise a technique for using collision detection primitives when they are not available. This is done at the cost of a logarithmic slowdown, which we prove is optimal.
Another significant contribution is the complexity analysis of all the presented algorithms, as well as an improved analysis of the MIS algorithm from~\cite{JSX16}. These contributions are now presented in more detail.

\subsubsection{Design patterns.}
We identify a number of common building blocks in beeping algorithms, including {\em multi-slot phases}: the fact of dividing the main loop into a (typically constant) number of slots having specific roles ({\it e.g.,} contention among neighbours, collision detection, termination detection); {\em exclusive beeps}: the fact of having a single node beep at a time in a neighbourhood (within one or two hops, depending on the needs); {\em adaptive probability}: increasing or decreasing the probability of beeping in order to favor exclusive beeps; {\em internal} (resp. {\em peripheral}) collision detection: the fact of detecting collision while beeping (resp. listening); and {\em emulation} of collision detection: the fact of detecting collisions even when it is not available as a primitive. Relying on these patterns makes
 it possible to formulate algorithms in a concise and elegant way.

\subsubsection{Algorithms}
We present algorithms for a number of basic graph problems, including colouring, 2-hop colouring, degree computation, maximal independent set (MIS) and 2-hop MIS.
Quite often, the design of algorithms is easier and more
natural if collision detection is assumed as a primitive, e.g., in
$B_{cd}L_{cd}$ or $B_{cd}L$. Furthermore, emulation techniques like the ones described in this paper enable safe and automatic translations
of algorithms into weaker models such as $BL$. For this reason, our
algorithms are expressed using whatever model is the most
convenient and the analyses are formulated in the same model. All these algorithms are of type Las Vegas (i.e. guaranteed result, uncertain time), as opposed to Monte Carlo (i.e. guaranteed time, uncertain result).
First, we present a colouring algorithm in the
$B_{cd}L$ model which requires no knowledge about $G$ but may result in a large number of colours.  If the nodes know an upper bound $K$ on the maximum degree in the network $\Delta$, then a different strategy is proposed that uses only $K+1$ colours. A particular case is when $\Delta$ itself is known, resulting in $\Delta+1$ colours, which is a significant difference with the algorithm of Cornejo and Kuhn~\cite{Cornejo10} that results in $O(\Delta)$ colours with similar time complexity (both are not really comparable, however, since our algorithm is Las Vegas in $B_{cd}L$ and their algorithm is Monte Carlo in $BL$; their upper bound $K$ also is required to be within a constant factor of $\Delta$, which is not the case in our algorithm). Then, we show how to extend these algorithms to 2-hop versions of the same problems in the
$B_{cd}L_{cd}$ model, using the fact that a 2-hop colouring in a graph $G$ is equivalent to a colouring in the {\em square} of the graph $G^2$ (i.e. $G$ with extra edges for all paths of length two). Based on the observation that degree computation is strongly related to 2-hop colouring, we present a further adaptation of the algorithm to this problem. Finally, we turn our attention to the 2-hop MIS problem, which combines features from 2-hop colouring and from the MIS algorithm by Jeavons et al.~\cite{JSX16}. Similarities are observed and used for analysis.

\subsubsection{Analyses.}

One advantage of using design patterns is that it makes clear that the high-level behaviors of some algorithms are actually very general. For instance, the running time of the main colouring algorithm boils down to characterising the first moment when {\em every} node has produced an exclusive beep. What makes the characterisation of this time complexity difficult is the use of the {\em adaptive probability} pattern. Our first analysis shows that such a process takes $O(\Delta + \log n)$ time to complete with high probability\footnote{In this paper, with high probability ({\it w.h.p.}, for short) refers to probability $1-o(n^{-1})$.} The analysis
relies on a martingale technique with non-independent random
variables, which makes use of an old result by
Azuma~\cite{AK67}. We prove a matching $\Omega(\Delta + \log n)$ lower bound that establishes that our algorithm and analysis are essentially optimal. We show that the upper bound analysis applies to 2-hop colouring and degree computation almost directly, resulting in a $O(\Delta^2 + \log n)$ running time. Then, we present a new analysis of the MIS algorithm from~\cite{JSX16}. In terms of patterns, this algorithm also relies on {\em exclusive beeps} and {\em adaptive probability}, but it completes faster due to the fact that an exclusive beep by one node causes its {\em whole} neighbourhood to terminate. We prove that the complexity of
this algorithm is less than $76 \log n$ phases of two slots each (thus $152 \log n$ slots) {\it w.h.p.}, which improves dramatically upon the analysis presented in~\cite{JSX16}, which results in an upper bound of $e^{25}$ phases of two slots each.
Although constant factors are not the main focus in general, the gap here is one between practical and impractical running times. As such, our result confirms that the MIS algorithm by Jeavons et al. is practical. This analysis extends to the 2-hop MIS algorithm, resulting in $76 \log n$ phases of three slots each (thus $228 \log n$ slots). We also observe that the $\Omega(\log n)$ lower bound for colouring in the bit-size message passing model~\cite{KOSS} applies to the MIS problem in beeping models, making the algorithm from~\cite{JSX16} asymptotically optimal (up to a constant factor).
Finally, we characterise the running time of the ($K+1$)-colouring algorithm with known $K \ge \Delta$. This analysis is slightly less general because the high-level structure of the algorithm is strongly dependent on this particular problem. We show that $O(K \log n)$ slots are sufficient {\em w.h.p.}, which indeed corresponds to the same $O(\Delta \log n)$ as in~\cite{Cornejo10} if $K=O(\Delta)$. By a similar argument as above, this analysis extends to the 2-hop ($K^2+1$)-colouring in $O(K^2\log n)$ time. All complexities are summarised in Table~\ref{tab:results}.









\begin{table}
\small
\centering
\setstretch{1.2}
\begin{tabular}{|c|c|c|c|c|c|}
\hline

Problem &Article&Model &\#\,slots ({\it w.h.p.}) & Knowledge &  Comment \\

\hline
\hline
\multirow{3}{*}{Colouring} & this paper & $B_{cd}L$  & $\Theta(\Delta + \log n)$ & - & $O(\Delta + \log n)$ colours\\
 &Cornejo et al.~\cite{Cornejo10}& $BL$& ~~~~~$O(\Delta \log n)$ ~MC~& $K =O(\Delta)$&  $O(\Delta)$ colours\\
& this paper &  $B_{cd}L$  & $O(K\log n)$ & $K \ge \Delta$& $K+1$ colours\\
\hline

\multirow{2}{*}{$2$-hop colouring} & this paper &  $B_{cd}L_{cd}$  & $O(\Delta^2 + \log n)$ & - & $O(\Delta^2 + \log n)$ colours \\
 & this paper &  $B_{cd}L_{cd}$  & $O(\Delta^2 \log n)$ & $K=O(\Delta)$ & $K^2 +1$ colours \\

\hline
Degree computation & this paper &  $B_{cd}L_{cd}$  & $O(\Delta^2 + \log n)$ & - & -\\
\hline
  \multirow{2}{*}{MIS (\& 2-hop MIS)} & Jeavons et al.~\cite{JSX16} & $B_{cd}L$ &  $\le 2\,e^{25} \log n$ & - & \multirow{2}{*}{$\Omega(\log n)$\cite{KOSS}}\\
~& this paper & $B_{cd}L$ & $\le 152 \log n$&-& \\
\hline
  2-hop MIS & this paper & $B_{cd}L$ & $\le 228 \log n$&-& -\\
\hline

Emulation of $B_{cd}L_{cd}$
& \multirow{2}{*}{this paper} & \multirow{2}{*}{$BL$} & \multirow{2}{*}{$\Theta(\log n)$ factor} & \multirow{2}{*}{$N = O(n^c)$} & \multirow{2}{*}{LV $\to$ MC} \\
(or $B_{cd}L$ or $BL_{cd}$)& & & & &\\\hline
\end{tabular}\medskip
\setstretch{1}
\caption{\label{tab:results}Beeping algorithms on graphs with $n$ nodes, where $\Delta$ denotes the maximum degree in the graph; LV stands for Las Vegas; and MC stands for Monte Carlo. Unless otherwise mentioned, all algorithms are LV. }
\end{table}

\subsubsection{Collision detection and emulation techniques.}
Classical considerations on symmetry breaking in anonymous 
beeping networks, see for example
\cite{Afek13} (Lemma 4.1), imply that there is no Las Vegas internal collision
detection algorithm in the beeping models $BL$ and $BL_{cd}$. Likewise, there is no Las Vegas peripheral collision detection algorithm in the beeping models $BL$ and $B_{cd}L$. Since collision detection is required to detect exclusive beeps with certainty, and this pattern is central in most beeping algorithms, this implies that a large range of algorithms cannot exist in a Las Vegas version 
in these models.
Inspired by a technique used in Algorithm~3 of~\cite{Afek13}, we study the cost of detecting collision when it is not available and we present generic procedures to transform Las Vegas algorithms with collision detection into Monte Carlo algorithms in $BL$. We show that any collision in the neighbourhood of a {\em given} node
can be detected in  
$O(\log(\epsilon^{-1}))$ slots with error at most $\epsilon$, or in $O(\log n)$ slots {\it w.h.p.} Then, this is true for {\em all} nodes using $O(\log (\frac n \epsilon))$ slots with error $\epsilon$ or $O(\log n)$ slots {\it w.h.p.} We prove that this technique is asymptotically optimal (up to a constant factor) by giving a matching lower bound; {\it i.e.}, some topologies require $\Omega(\log n)$ slots to break symmetries {\it w.h.p.}



\subsection{Organisation of the paper} 

In Section~\ref{sec:definitions}, we present the model and give further definitions. Section~\ref{sec:patterns} introduces design patterns with several examples. The patterns are used in Section~\ref{sec:algorithms} to describe the various algorithms. For the sake of readability, the corresponding analyses are put together in a separate section (Section~\ref{sec:analysis}). Section~\ref{sec:emulation} presents and analyses the algorithms for collision detection in $BL$, and transposition techniques from higher models. 

\section{Network Model and Definitions}
\label{sec:definitions}

We consider a wireless network and
we follow definitions given by Afek. et al~\cite{Afek13} and Cornejo et al.~\cite{Cornejo10}.
The network is anonymous: unique identifiers are not available to
distinguish the processes. 
Possible communications are encoded by a graph $G=(V,E)$ where the nodes
$V$ represent processes and the edges $E$ represent pairs of processes
that can hear each other's beeps. 
We denote by $\Delta$ the maximum degree in $G$.
The neighbourhood of
a vertex $v$, denoted $N(v)$, is the set of vertices adjacent to $v$.
We define $\overline{N}(v)$ 
 by including $v$ itself in $N(v)$.
We also use the set of vertices at distance at most $2$ from $v$ called
the $2$-neighbourhood of $v$ and denoted $N_2(v)$ (or $\overline{N_2}(v)$ if it includes $v$).
Finally, we write $\log n$ for the binary logarithm of $n$ and $\ln n$ for the natural logarithm of $n$.

Time is divided into discrete synchronised time intervals (rounds) also called {\em slots}. 
All processes wake up and start computation in the same slot.
In each slot, all processors act in parallel and each can either beep or listen. 
In addition, processors can perform an unrestricted amount of local computation between two slots (this assumption is for compliance only, our algorithms do not require it).

\begin{remark}\label{listen}
In general, nodes are {\em active} or {\em passive}, depending on whether they are still taking part in the computation. When they are active
they beep or listen; in the description of algorithms we say explicitly when 
a node beeps meaning that a non beeping active node listens.
\end{remark}

The time complexity, also called {\em slot complexity}, is the maximum number of slots needed until all nodes have terminated. Our algorithms are typically structured into {\em phases}, each of which corresponds to a small (constant or logarithmic) number of slots. In the algorithm, we specify which one is the current slot by means of a {\tt switch} instruction with as many {\tt case} statements as there are slots in the phase. Phases repeat until some condition holds for termination.

\begin{remark}
An algorithm given in a beeping model induces an algorithm in the (synchronous)
message passing model. Thus, given a problem,
any lower bound on the round complexity
in the message passing model also holds for slot complexity in the beeping model.
\end{remark}

\paragraph*{Distributed Randomised Algorithm.}
A randomised (or probabilistic) algorithm is an algorithm which makes choices
 based on given probability distributions. 
A {\em distributed} randomised algorithm is a collection of local randomised
algorithms (in our case, all identical). 
A {\em Las Vegas} algorithm  is a randomised algorithm whose running time is not deterministic, but still finite with probability $1$, and that always produces
 a correct result.
A {\em Monte Carlo} algorithm is a randomised algorithm whose running time is deterministic, but whose result may be incorrect with a certain
probability. Put differently, Las Vegas algorithms have uncertain execution time but certain result, and Monte Carlo algorithms have certain execution time but uncertain result.
Classical considerations on symmetry breaking in anonymous 
beeping networks (see for instance Lemma~4.1 in~\cite{Afek13}), imply that:
\begin{remark}
There is no Las Vegas (and a fortiori no deterministic) algorithm in $BL$ which allows a node
to distinguish between an execution where it is isolated and one where it has exactly one neighbour.
\end{remark}

From this remark we deduce that there is no Las Vegas counting algorithm
in $BL$, which advocates the use of stronger models. In what follows, we consider whichever model is the most convenient and provide Las Vegas algorithms in these models. We then present canonical emulation techniques to turn any such algorithm into a Monte Carlo one in $BL$.

\paragraph{Graph problems.}

  Usually, the topology of a distributed system is modelled
by a graph and paradigms of distributed systems are represented by
classical problems in graph theory
such as computing the degree of the nodes,  computing a maximal independent set (MIS for short), a 
$2$-MIS  (i.e. a MIS in the square of $G$, that is, the same graph as $G$ with additional edges for distance 2 neighbors), a proper colouring
(a colouring  of a graph $G$ assigns colours to vertices
such that two neighbours have different colours), or a 
$2$-hop-colouring
(colouring of the square of $G$).
Each solution to one of these problems is a
building block for many distributed algorithms: symmetry breaking,
topology control, routing, resource allocation or network
synchronisation (see e.g.~\cite{Peleg,KMR01,EPSW14}).

\section{Design patterns for beeping algorithms}
\label{sec:patterns}

As a preliminary, this section presents a number of design patterns which seem to occur frequently in the design of beeping algorithms. The concept of pattern refers here to reusable solutions to common problems. These patterns are then used to describe algorithms in the other sections.

\paragraph*{Exclusive beeps.}
Beeping algorithms operate in synchronous periods called {\em slots}, which are equivalent to the concept of rounds in message passing models. Most problems in distributed computing require some node $v$ to take exclusive 
decisions at times (i.e., with respect to vertices of $\overline{N}(v)$ or $\overline{N_2}(v)$), which requires some type of symmetry breaking. In beeping networks, this goal is all the more difficult to achieve since the nodes cannot use identifiers nor even port numbers in their basic exchanges. If we assume that a node that is beeping can detect whether another node beeped simultaneously ($B_{cd}$), then this feature can be used to take an exclusive decision when indeed it beeped alone. We call this an {\em exclusive beep}. Algorithm~\ref{exclusive-beeps} illustrates an empty shell of algorithm that relies on repeated attempts to produce exclusive beeps. Most, if not all algorithms rely implicitly on this pattern as a basis.

\begin{algorithm}[h]
\caption{\label{exclusive-beeps}Exclusive beeps (using $B_{cd}$).}
\Repeat{$finished$}{\vspace{2pt}
        beep with some probability;\\
        \If{I beeped alone}{
          {\tt do something exclusive};
        }\vspace{2pt}
        ...\vspace{2pt}\\
}
\end{algorithm}

\paragraph*{2-hop exclusive beeps.}
For some problems like 2-hop colouring, 2-hop MIS, or computation of the degree (all discussed in this paper), the level of mutual exclusion offered by exclusive beeps is not sufficient and the algorithm requires that a node be the only one to beep within distance 2. Assuming collision can also be detected upon listening ($L_{cd}$), one can design a 2-slots pattern whereby non-beeping neighbours report if they have heard more than one beep. Hence, if a node produced an exclusive beep in the first slot, and none of its neighbours reported a collision in the second, then it knows that it has produced a {\em 2-hop exclusive beep} (Algorithm~\ref{2-hop-exclusive-beeps}).

\begin{algorithm}
\caption{\label{2-hop-exclusive-beeps}Two-hops exclusive beeps (using $B_{cd}L_{cd}$).}
\Repeat{$finished$}{
  \Switch{slot}{
\SetKwSwitch{Switch}{Case}{}{}{}{slot}{}{}{}%
    \uCase{1\sidecomment{contending}}{
        beep with some probability;
    }
    \uCase{2\sidecomment{detection of peripheral collision}}{
      \If{several neighbours beeped in slot 1}{beep}
    }
\SetKwSwitch{Switch}{Case}{}{}{}{after slot 2}{}{}{}%
    \uCase{}{ 
      \If{I beeped alone in slot 1 and no neighbour beeped in slot 2}
      {{\tt do something 2-hop exclusive}\\
      }
    }\vspace{3pt}
    ...\\
  }
}
\SetKwSwitch{Switch}{Case}{Other}{switch}{do}{case}{otherwise}{}{}
\end{algorithm}

\paragraph*{Multi-slot phases.}
Algorithm~\ref{2-hop-exclusive-beeps} illustrates another common aspect of beeping algorithms, namely {\em multi-slot phases}. The expressivity of a single beep is rather poor, but several combined slots can achieve elaborate behavior. In Algorithm~\ref{2-hop-exclusive-beeps}, one slot is devoted to contending and another to peripheral collision detection. The whole compound is then called a {\em phase}. 

\paragraph*{(Local) Termination detection.} In a multi-slot phase, one can add an extra so-called {\em termination slot}, in which all nodes that have not yet performed some action beep.
If a node's neighbours remain silent, then a form of local termination is detected, and the node can enter a terminal state (or switch to a subsequent activity).

\paragraph*{Adaptive probability.}
As far as feasibility and expressivity are concerned, the next design pattern is not crucial. However, it plays a central role in terms of performance. {\em Adaptive probability} consists in adapting the probability for a node to beep in the next phase depending on the outcome of previous phases. Typically, if a collision occurs, the probability is reduced, and if no one beeps, it is increased. Since the nodes do not know how many neighbours they are contending with (they do not know their degree), this technique increases significantly the odds of producing exclusive beeps.
\begin{algorithm}[h]
\caption{\label{adaptive}Adaptive beeping probability (using $B_{cd}L_{cd}$).}
$Float\ p \gets 1/2$ ~\tcp{say}
\Repeat{$finished$}{
        beep with probability $p$;\\
        \If{I beeped alone}{
          \tt do something exclusive;}
        \Else{
          \If{no one beeped}{
            increase $p$;
          }\Else{
            decrease $p$;
          }
        }
}
\end{algorithm}
Observe that the effective values are not given in Algorithm~\ref{adaptive}. Instead, we rely on the generic terms ``increase'' and ``decrease'' for generality of the pattern. In the analysis section,
 we use a doubling/halving pattern, that is, $p$ is increased to $2p$ (up to $1/2$), and it is decreased to $p/2$ (without limit).
 A similar doubling/halving pattern was used in~\cite{Scott13}. One could also increment or decrement the denominator of $p$ as done in~\cite{CMRZ16c}.
 The consequences of choosing one over the other are not discussed in this paper.

\paragraph*{Collision detection.}
Most algorithms in this paper use collision detection as a built-in primitive, referred to as $B_{cd}$ for detection on beeping and $L_{cd}$ for detection on listening. However, this feature is not always available as a primitive. An important question is the transformation of a (high-level) algorithm using $B_{cd}$ or $L_{cd}$ (or both) into one that works in the weakest $BL$ model. This question is the topic of Section~\ref{sec:emulation}, in which we study generic mechanisms to achieve this goal. Essentially, each slot that requires collision detection can be replaced with a logarithmic number of slots (in the size of various quantities depending on the desired guarantees) where the ties are broken
{\it w.h.p.} We provide dedicated procedures that generalise the technique used internally to one of the algorithms in~\cite{Afek13}. Besides complexity, the price to pay is that the algorithm becomes Monte Carlo instead of Las Vegas, that is, the result is correct only probabilistically, which is unavoidable. We present a matching lower bound that shows that these procedures are essentially optimal.

\section{Algorithms for basic graph problems}
\label{sec:algorithms}

We now present algorithms for a number of problems, including colouring (with or without information on the degree), 2-hop colouring, computation of the degree and 2-hop MIS. These algorithms are expressed using combinations of patterns presented in Section~\ref{sec:patterns}, which makes their exposition rather intuitive. We also recall Jeavons et al.'s Las Vegas algorithm for the MIS~\cite{JSX16}
problem and discuss its relations with our 2-hop MIS algorithm. All algorithms are Las Vegas and rely on whichever primitive ($B_{cd}L$ to $B_{cd}L_{cd}$ models) is convenient. The canonical adaptation of these algorithms in the weakest model ($BL$) is then discussed in Section~\ref{sec:emulation}.

\subsection{Colouring without knowledge}\label{sec:colouring}

The colouring problem consists of assigning a colour to every node in the network, such that no two neighbours have the same colour.
We first consider the case that no extra information is available to the nodes. 
Informally, the algorithm proceeds as follows (see Algorithm~\ref{algo:colouring} for details). Initially, every node is uncoloured ($nil$). In every phase, each node increments a counter. Uncoloured nodes contend with each other to produce an {\em exclusive beep}, and when one succeeds, it takes the current value of the counter as its colour and becomes passive. An {\em adaptive probability} is used to regulate the probability of beeping among active nodes. 
Local termination (a node and its neighbours are coloured)
detection is not explicitly handled here, although a {\em termination slot} could be added, where uncoloured nodes are the only ones to beep.

\begin{algorithm}[h]
\caption{\label{algo:colouring}Colouring algorithm in $B_{cd}L$ (without knowledge).}
$Float\ p \gets 1/2$;\\
$Integer\ colour \gets nil$;\\
$Integer\ counter \gets 0$;\\
\Repeat{$colour \ne nil$}{
        beep with probability $p$;\\
        \If{I beeped alone}{
          $colour \gets counter$}
        \Else{
          \If{no one beeped}{
            increase $p$;
          }\Else{
            decrease $p$;
          }
          $counter \gets counter + 1$;
        }
}
\end{algorithm}

We show in Section~\ref{sec:analysis} that 
the running time of this algorithm is of $O(\log n + \Delta)$ phases (in expectation and {\it w.h.p}), assuming a {\em doubling/halving} pattern is used for the adaptive probability.
 This also corresponds to the number of slots, since each phase consists of a constant number of slots. As to the number of colours, which is incremented in every phase, it is {\em at most} the same (some phases may not witness exclusive beeps).

\subsection{$(K+1)$-Colouring with a known bound $K \ge \Delta$}\label{sec:colouring-K}

If a bound $K\ge \Delta$ is known, then one can obtain a better colouring using at most $K+1$ colours. The algorithm follows the same lines as Algorithm~\ref{algo:colouring}, i.e. a colour counter is incremented in each phase, and its current value is chosen by those nodes who produce an exclusive beep. The main difference (see Algorithm~\ref{algo:colouring-K} for details) is that only those colours within $\{0,\dots, K\}$ are considered and thus the counter is incremented modulo $K+1$. Conflicts of colours are avoided by keeping a phase idle if the corresponding value was already taken in the past by a local neighbour. To do so, when a node takes a colour, it {\em re-beeps} in a dedicated {\em confirmation slot} to inform its neighbours that they must remove the current colour from their list of authorized colours. Accordingly, the uncoloured nodes will contend in a phase only if the current colour is still available (otherwise, they wait). An adaptive probability is used similarly to Algorithm~\ref{algo:colouring}, except that the probability is not updated in idle phases.

\begin{algorithm}[h]
\caption{\label{algo:colouring-K}$(K+1)$-Colouring algorithm in $B_{cd}L$ (knowing $K \ge \Delta$).}
$Colours$ $=\{0,\cdots,K\}$; \\
$Float\ p \gets 1/2$;\\
$Integer\ colour \gets nil;$ \\
$Integer\ counter \gets 0;$
 
\Repeat{$colour \ne nil$}{
  \If{$counter\in Colours$}{
    \Switch{slot}{
      \SetKwSwitch{Switch}{Case}{}{}{}{slot}{}{}{}%
      \uCase{1\sidecomment{contending}}{
        beep with probability $p$
      }
      \uCase{2\sidecomment{confirmation}}{
        \If{I beeped alone in slot 1}{ $colour \gets counter;$\\
          beep;
        }
        \Else{
          \If{no one beeped}{
            increase $p$;
          }\Else{
            decrease $p$;
          }
        }

      }
      \If{someone beeped in slot 2}
      {$Colours \gets Colours\setminus\{
        counter\}$} 
    }
  }
  $counter \gets (counter + 1) \mod (K+1)$;
}
\end{algorithm}

\subsubsection{A variant exploiting $K$ in the adaptive probability.}
The purpose of adaptive probability is to adjust the beeping probability to the local density (number of neighbors) and keep adjusting it as the execution progresses and the number of contenders decreases. If a bound $K$ on the degree is known, then a variant can be considered where the adaptive probability uses this information instead. Indeed, in the case of the $(K+1)$-colouring algorithm (Algorithm~\ref{algo:colouring-K}) the number of remaining colours is also a bound on the number of remaining contenders (and a good one if $K$ is close to the initial degree). We will consider such a variant in our analysis of Section~\ref{sec:analysis-colouring-K}, instead of the classical doubling/halving pattern (for which the dependencies proved difficult to manage). The exact variant we consider is as follows. We call a {\em cycle} a sequence of $K+1$ phases. In the beginning of each cycle, every node updates its beeping probability $p$, setting it to one over twice the number of unused colours. In this context, we prove that the number of phases is $O(K \log n)$ {\it w.h.p.}.
Intuitively it would seem more reasonable to always use the most recent information on the
number of available colours rather than the number at the beginning of the current cycle
but we can only prove a weaker result for such an algorithm.

\comments{
 

 
}

\subsection{2-hop colouring}\label{2col}
\label{sec:2-hop-colouring}

A 2-hop colouring of a graph $G$ is a colouring such that any two nodes at distance $\le 2$ have different colours. In other words, it is a colouring of the square of $G$, the graph where an edge exists between nodes which are neighbours in $G$ or share a common neighbour in $G$.

\paragraph*{2-hop colouring without knowledge.}

A similar strategy is used as in Algorithm~\ref{algo:colouring} (colouring), except that exclusive beeps are replaced with {\em 2-hop exclusive beeps}. Whenever a node produces such a beep, it takes the current value of the counter as colour. Since no other node has beeped within distance $2$, the colouring is legal. Contrary to the 1-hop colouring, the collaboration of a node remains crucial even after it becomes coloured. Indeed, this node must keep on reporting peripheral collisions to its neighbours. As a result, instead of retiring from computation, coloured nodes keep on listening until all of their neighbours are coloured, which is detected using an extra {\em termination slot}. Details are given in Algorithm~\ref{algo:2-hop-colouring}. Four slots are used in total, the first two being devoted to the management of 2-hop exclusive beeps (see Section~\ref{sec:patterns} for details). The third slot manages a (2-hop) adaptive probability based on beeps heard at distance one (slot 1) or at distance two (slot 3 itself). Finally, slot 4 is the termination slot.
\begin{algorithm}[h]
\caption{\label{algo:2-hop-colouring}2-hop colouring algorithm in $B_{cd}L_{cd}$ (without 
knowledge).}
$Float\ p \gets 1/2$;\\
$Integer\ colour \gets nil$;\\
$Integer\ counter \gets 0$;

\Repeat{no beep heard in slot $4$}{
  \Switch{slot}{
    \SetKwSwitch{Switch}{Case}{}{}{}{slot}{}{}{}%
    \uCase{1\sidecomment{contending slot}}{
      \If{$colour = nil$}{
        beep with probability $p$;
      }
    }
    \uCase{2\sidecomment{peripheral collision detection (and consequences)}}{
      \If{several neighbours beeped in slot 1}{beep}
      \If{I beeped alone in slot 1 and heard no beep in slot 2}
      {$colour \gets counter$}
    }
    \uCase{3\sidecomment{adaptive probability}}{ 
      \If{someone beeped in slot 1}{beep}
      \If{$colour = nil$}
      { 
        \If {no beep heard in slot 1 nor 3}
        {increase $p$}
        \Else {decrease $p$}}
    }

    \uCase{4\sidecomment{termination slot}}{ 
      \If{$colour = nil$}{beep} 
    }
  }
  $counter \gets counter + 1$
}

\end{algorithm}

Once we realize that the execution produced here is the same as what Algorithm~\ref{algo:colouring} would produce in the square of $G$, analysis of this algorithm is straightforward. The only difference is that the maximal number of contenders of a node becomes $\Delta^2$ instead of $\Delta$. Thus Algorithm~\ref{algo:2-hop-colouring} takes $O(\log n + \Delta^2)$ phases (and slots) {\it w.h.p.}, and the number of colours cannot exceed the same value.

\paragraph*{With a bound $K$ on the maximum degree $\Delta$.}
The same idea can be applied as in the 1-hop variant, {\it i.e.}, taking colours between $0$ and $K^2 $ (instead of $K$) and incrementing the counter modulo $K^2 + 1$. As a result, at most $K^2+1$ colours are used, with time complexity $O(K^2\log n)$ {\it w.h.p.}

\subsection{Degree computation}
\label{sec:degree}

In this paragraph, we discuss the degree computation problem, which consists, for every node, of computing the number of its neighbours in the network.
Let us recall that 2-hop exclusive beeps allow a node to perform an exclusive action within a radius of distance 2. This feature was used in Algorithm~\ref{algo:2-hop-colouring} to assign unique colours. It turns out that the pattern is quite versatile and can also be used, for instance, to compute the degree of a node. In more detail, a 2-hop exclusive beep realised by some node $v$ implies that all neighbours of $v$ become aware of its presence (they all increment their own degree), then $v$ stops contending and keeps on reporting peripheral collisions. The corresponding modifications of Algorithm~\ref{algo:2-hop-colouring} are straightforward. They consist of having a new confirmation slot inserted, in which $v$ re-beeps if indeed it produced a 2-hop exclusive beep. Upon hearing the confirmation beep, all of $v$'s neighbours increment a local counter that eventually amounts to their degree. Termination proceeds as before, i.e. all uncounted nodes beep in a termination slot so that local termination is detected.
Up to a change in the constant factor, which accounts for the additional confirmation slot in each phase, the running time of this algorithm remains unchanged, that is $O(\log n + \Delta^2)$ slots (or phases) {\it w.h.p}.

\subsection{MIS and 2-hop MIS}
\label{sec:mis}

The {\em maximal independent set} problem (MIS, for short) consists of selecting a set of nodes, none of which are pairwise neighbours, and such that the set is maximal for the inclusion relation. In~\cite{JSX16}, Jeavons et al. present a (Las Vegas) beeping algorithm for the MIS problem in the $B_{cd}L$ model. Thanks to the patterns introduced in Section~\ref{sec:patterns}, this algorithm can be described in a very concise and intuitive way as follows. Like most, this algorithm relies on having the nodes produce exclusive beeps in competition with each other. Whenever a node succeeds, it enters the MIS. So far, the process is quite similar to that of the colouring algorithm presented above. A fundamental difference is that once a node has produced such a beep, then its {\em whole} neighborhood terminates at once. Indeed, if a node enters the MIS, then all of its neighbours are settled at the same time about their membership (i.e. not in the MIS). This elimination is made through a confirmation slot in which a node re-beeps if it has produced an exclusive beep (as seen above). Finally, an adaptive probability (with doubling/halving pattern) is used to maximise the odds of producing exclusive beeps.

The elimination of all neighbours upon exclusive beeps makes the process faster to terminate. Jeavons et al. prove that it terminates within $O(\log n)$ phases, however with an upper bound of $e^{25}$ on the constant factor. Although a much lower complexity is observed in practice by the authors, they did not attempt to establish better bounds analytically. In Section~\ref{sec:analysis-MIS}, we present a new analysis of this algorithm that takes the constant down to $76$ (for the number of {\em phases}), thereby confirming the authors empirical evidence that their algorithm is practical. Although constant factors are not the main focus in complexity, the gap in this case is one between practical and impractical running times, which we believe makes our analysis a substantial contribution.

\paragraph{Computing a 2-hop MIS.}

A 2-hop MIS~\cite{MIS2} is a set of nodes such that no pair of selected nodes are within distance 2 and the set is maximal under ordering by inclusion. Similarly to the colouring algorithm presented above, the simple observation that a 2-hop MIS is a MIS in the square of the graph allows one to extend the (1-hop) algorithm to the 2-hop case fairly easily. Namely, whenever a node produces a 2-hop exclusive beep, it enters the MIS and informs its neighbours (using a confirmation slot) that they are eliminated. Unlike the colouring algorithms, there is no dependence here on $\Delta$ (the largest degree) in the time complexity. As a result, the number of phases of the 2-hop algorithm, which is equivalent to that of the 1-hop algorithm in the square of the graph, remains of $O(\log n)$ {\it w.h.p.} (with the same constant factor of $76$). As before, the number of {\em slots} is slightly scaled due to the additional confirmation slot in each phase.

\section{Collision detection and emulation techniques}
\label{sec:emulation}

\newcommand{\phase}{sub-phase\xspace}
\newcommand{\phases}{sub-phases\xspace}

In the previous sections, we have considered collision detection as an available primitive. Depending on the algorithms, we assumed that collision detection was possible while beeping ($B_{cd}$) or while listening ($L_{cd}$). This assumption is convenient because it allows one to design simple algorithms. Furthermore, it allows the algorithms to be of a {\em Las Vegas} type (see Section~\ref{sec:definitions} for definitions). Unfortunately, we know since~\cite{Afek13} that no Las Vegas algorithms can be designed for non-trivial problems without collision detection, that is, in the $BL$ model. One has to turn to Monte Carlo instead, which means that the result is correct only with some probability (possibly {\em w.h.p.}). In this section, we investigate the cost of building a probabilistic collision detection primitive in the $BL$ model, generalising a technique used inside one of the algorithms of~\cite{Afek13}. Then, we adapt this technique into two generic emulation procedures, one for detecting collision while beeping, the other while listening. These procedures can subsequently be used to translate any Las Vegas algorithm expressed in $B_{cd}L$, in $BL_{cd}$, or in $B_{cd}L_{cd}$, into a Monte Carlo algorithm in $BL$. The cost to pay is a logarithmic slowdown of the running time, which we prove is essentially optimal (for sufficiently large $n$) thanks to a matching lower bound. 

\subsection{Collision detection}
\label{sec:detection}

\comments{
}
The impossibility for a node in $BL$ to distinguish between being alone or having neighbours has strong implications. For instance, in the colouring problem, it means that two neighbours could possibly end up with the same colour. In the MIS problem, two neighbours could enter the MIS. In fact, there is no guarantee on the correctness of basic patterns like exclusive beeps or 2-hop exclusive beeps, which are at the basis of most (if not all) Las Vegas algorithms. 

We present a (Monte Carlo) algorithm for detecting collisions
in $BL$. This procedure generalises the technique used in Algorithm~3 of~\cite{Afek13}, which consists in replacing a slot that requires collision detection in the original model, by several $BL$ slots in which symmetries are {\em probabilistically} broken. Of course, the more slots are paid, the more reliable the detection. Later on, we investigate the tradeoff between different levels of guarantees and different numbers of $BL$ slots per original slot.

\paragraph*{Details of the algorithm.}
Each slot that requires collision detection ($B_{cd}$ or $L_{cd}$) is replaced by a number of {\em \phases}, each consisting of two $BL$ slots. For instance, if a node wishes to beep with collision detection in the original algorithm, it will choose one of the two slots ({\em u.a.r.}) in each of the \phases and will beep in that slot (listen in the other). If it hears a beep while listening in the other slot, then an internal collision is detected. Similarly, if a node wishes to listen with collision detection in the original algorithm, it will listen in both slots of each \phase. A peripheral collision is detected if a beep is heard in both slots of a same \phase. The procedure is detailed by Algorithm~\ref{algo:detection}, where $k$ is the number of \phases used.

\begin{algorithm}[h]
\caption{\label{algo:detection}Collision detection algorithm in $BL$ (with parameter $k$)}
$Boolean\ collision \gets false$;\\
$Integer\ i \gets 0;$\\
\While{$i < k$}{
  \If{$v$ wishes to beep}
  {Flip a coin; \\
    \If{{\tt heads}}
    {beep in slot 1;\\listen in slot 2;}
    \Else{listen in slot 1;\\beep in slot 2;}
    \If{another beep was heard}{$collision \gets true$}
  }
  \Else{listen in both slots;\\
    \If{beeps are heard in both slots}{$collision \gets true;$}
  }
  $i \gets i+1$;
}
{return $collision$;}
\end{algorithm}

False positives never happen, but real collisions might go unnoticed, with probability inversely related to the number $k$ of sub-phases. We are interested in determining how large $k$ should be to guarantee that a {\em given} node detects a collision in its neighbourhood with good probability. The stronger question asks how many \phases are required so that {\em none} of the nodes fails to detect a potential collision ({\it w.h.p.})

\begin{lemma}
\label{lemma::collision_local}
Let $v$ be a node. If a collision occurs in the neighbourhood of $v$, then 
$v$ detects it in $O\left(\log(\frac 1 \epsilon)\right)$ \phases (slots)
with
probability at least $1-\epsilon,$ and in $O\left(\log n\right)$ \phases
(slots) with probability $1-o\left(\frac 1 {n^2}\right)$.
\end{lemma}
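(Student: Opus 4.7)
\medskip
\noindent\textbf{Proof plan.} My plan is to argue that in each \phase, the conditional probability that $v$ detects an existing collision is at least $1/2$, independently across \phases, and then invoke independence to amplify the success probability by a logarithmic number of repetitions.

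First, I would split into two cases according to whether, in the original (high-level) slot being emulated, $v$ itself was supposed to beep or to listen. Consider Case~1, where $v$ is listening. A collision in the neighbourhood of $v$ means that at least two neighbours $u_1,\dots,u_m$ (with $m\ge 2$) are beeping. In a single \phase, each $u_i$ independently flips a fair coin to decide whether to beep in slot~1 or in slot~2, while $v$ listens in both. The event ``$v$ fails to detect'' is exactly the event ``all $u_i$ chose the same slot,'' whose probability is $2\cdot(1/2)^m = 2^{1-m}\le 1/2$. Hence detection occurs with probability at least $1/2$ per \phase. In Case~2, where $v$ is beeping, a collision means that at least one other neighbour $u$ is also beeping. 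Then $v$ detects whenever $u$'s coin differs from $v$'s coin, which happens with probability $1/2$; with more beeping neighbours the detection probability only grows.

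Next, I would observe that the coin flips made by $v$ and by its neighbours in distinct \phases are mutually independent (they are drawn afresh in each \phase by Algorithm~\ref{algo:detection}), so the events ``detection fails in \phase $i$'' are independent across $i$. Therefore the probability that $v$ fails to detect in all $k$ \phases is at most $2^{-k}$.

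Finally, I would tune $k$. Setting $k=\lceil\log(1/\epsilon)\rceil$ gives failure probability at most $\epsilon$, so $v$ detects the collision with probability at least $1-\epsilon$ in $O(\log(1/\epsilon))$ \phases (and hence $O(\log(1/\epsilon))$ slots, since each \phase uses two slots). Setting $k=\lceil 3\log n\rceil$ gives failure probability at most $n^{-3}=o(n^{-2})$, yielding the second bound. The main (mild) subtlety is making sure the per-\phase bound $\tfrac{1}{2}$ is uniform in the number of beeping neighbours and in the case (beeping vs.\ listening); once this is done the amplification is immediate.
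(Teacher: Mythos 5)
Your proposal is correct and follows essentially the same argument as the paper: per-\phase, two colliding beepers fail to be separated with probability at most $1/2$, the \phases are independent, so the failure probability after $k$ \phases is at most $2^{-k}$, and one then sets $k=O(\log(1/\epsilon))$ or $k=O(\log n)$. The only cosmetic difference is that you split into the beeping/listening cases and track all $m$ beeping neighbours, whereas the paper fixes one colliding pair and notes that additional beepers can only help.
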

\begin{proof}
Assume a collision occurs between some nodes $u_1$ and $u_2$ in the neighbourhood of $v$ (one of them being possibly $v$ itself). It is detected if $u_1$ and $u_2$ choose a different slot in at least one of the $k$ \phases. 
The probability that this does not happen is
$\left(\frac 1 {2}\right)^k$. This probability is less than
$\epsilon$ (resp. $o\left(\frac 1 {n^2}\right)$) for any
$k \ge \log(\frac 1 \epsilon)$
(resp.  $2\log(n)$). Observe that if collisions occur between more than two nodes in the neighbourhood of $v$, this cannot decrease the odds of a successful detection (to the contrary, the odds can only increase).\qed
\end{proof}

\begin{corollary}
Let $G$ be a graph. If collisions occur in the neighbourhood of an arbitrary number of nodes, then all of them detect collision after at most $O\left(\log(\frac
n \epsilon)\right)$ \phases (slots) with probability at least $1-\epsilon$, and
after at most $O\left(\log n\right)$ \phases (slots) {\it w.h.p.}
\end{corollary}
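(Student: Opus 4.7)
The plan is to derive the corollary from Lemma~\ref{lemma::collision_local} by a straightforward union bound over the nodes that are supposed to detect a collision. Let $S \subseteq V$ denote the set of nodes in whose neighbourhood a collision actually occurs; clearly $|S| \le n$. By Lemma~\ref{lemma::collision_local}, for any fixed $v \in S$ and any $k$ sub-phases of Algorithm~\ref{algo:detection}, the probability that $v$ fails to detect the collision is at most $2^{-k}$.

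For the first claim, I would instantiate the lemma with failure probability $\epsilon/n$ per node, which by the lemma requires $k = O(\log(n/\epsilon))$ sub-phases. A union bound over the at most $n$ nodes of $S$ then gives total failure probability at most $n \cdot (\epsilon/n) = \epsilon$, so with probability at least $1-\epsilon$ every node in $S$ detects a collision within $O(\log(n/\epsilon))$ sub-phases (and hence slots, up to a constant factor). For the second claim, I would take $k = c \log n$ for a suitably large constant $c$, so that by the lemma each individual failure probability is $o(1/n^2)$; the union bound over at most $n$ nodes yields a global failure probability of $o(1/n)$, which matches the paper's convention of \textit{w.h.p.} ($1 - o(n^{-1})$).

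The only mild subtlety worth flagging is that the detection events at different nodes are not independent: they share the same underlying coin flips from Algorithm~\ref{algo:detection} (two neighbours of $v$ may also be neighbours of some other $w$). This is irrelevant here because the union bound only needs marginal probabilities, not a joint distribution, so no additional argument is required. I do not expect any real obstacle in this proof; it is a one-line boosting argument on top of Lemma~\ref{lemma::collision_local}, and the main care is simply to track which $k$ and which target failure probability are needed in each of the two regimes.
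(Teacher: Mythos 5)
Your proposal is correct and matches the paper's own proof: both derive the corollary by applying Lemma~\ref{lemma::collision_local} with per-node failure probability $\epsilon/n$ (equivalently, $k = \log(n/\epsilon)$ sub-phases) and then taking a union bound over the at most $n$ concerned nodes, with the \emph{w.h.p.} claim following identically from the second part of the lemma. Your remark that dependence between nodes' detection events is irrelevant to the union bound is a sound (if implicit in the paper) observation.
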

\begin{proof}
Assume collisions occur in $G$ and let $T$ denote the number
of \phases before all concerned nodes detect collision. Clearly $T=\max\{
T_v\mid v\in V\}$, where $T_v$ is the time it takes any node $v$ to decide collision. By the same argument as in the proof of Lemma~\ref{lemma::collision_local}, together with the union bound, it holds that

\begin{eqnarray}
{\mathbb P}r
\left(
T>\log\left(\frac n \epsilon\right)\right)
& \leq & n\times 
{\mathbb P}r
\left(
T_v>\log\left(\frac n \epsilon\right)\right)\\
& = & n\times \frac 1 {2^{\log(\frac n \epsilon)}}= \epsilon
\end{eqnarray}
which proves the first claim. The same argument, combined with the
second claim of Lemma \ref{lemma::collision_local} proves the second
claim.\qed
\end{proof}

\subsection{Emulation procedures}\label{sec:emuler}

Based on the tie-breaking mechanism presented in Algorithm~\ref{algo:detection},
 we define two probabilistic emulation procedures whose purpose is 
to replace beep or listen instructions with collision detection in $BL$. 
Both are Monte Carlo in the sense that detection is only guaranteed 
with some probability.
The first procedure, {\tt EmulateB$_{cd}$inBL()}, 
is given by Algorithm~\ref{algo:emulateBcd} and the second, 
{\tt EmulateL$_{cd}$inBL()}, by Algorithm~\ref{algo:emulateLcd}.
 Both procedures are
parametrized by an integer $k>1$, which accounts for the number of 
\phases that are used in each invocation of the procedure ($k$ controls the error bound). 
They return {\tt true} if a collision has been detected, {\tt false} otherwise.

Before the overall execution, each vertex generates a sequence $s$ of $k$ random bits ({\it u.a.r.}), each of which corresponds to a \phase.
Thus, if two nodes generate different sequences, all collisions between
them will be detected whatever the length of the computation.
Note that the same sequence will be used in every call of the procedure (i.e. it is fixed for every node); therefore, the random bits are not drawn from within the procedure itself.

\begin{algorithm}[h]
    \caption{\label{algo:emulateBcd}A Procedure to emulate a $B_{cd}$  in  the $BL$ model.}  
    \textbf{Procedure } {\tt EmulateB$_{cd}$inBL}($in: Integer\ k, Array$$<$$Boolean$$>\ s; out: Boolean\ collision$)
    
    $Boolean\ collision \gets false;$
    
    $Integer\ i \gets 0;$
    
    \Repeat{$i =  k$ }{
           
      \lIf{$s[i]$}{ beep in slot 1; listen in slot 2}
      
      \lElse{listen in slot 1; beep in slot 2}
      
      \lIf{another beep was heard}{$collision \gets true$}
       $i \gets i+1$
      
    }
    
    \textbf{End Procedure}
\end{algorithm}

\begin{algorithm}[h]
    \caption{\label{algo:emulateLcd}A Procedure to emulate a $L_{cd}$  in  the $BL$ model.}  
    \textbf{Procedure } {\tt EmulateL$_{cd}$inBL}($in: Integer\ k; out:Boolean\ beep, Boolean\ collision$)\\
    $Boolean\ beep \gets  false;$\\
    $Boolean\ collision \gets  false;$
    
    $Integer\ i \gets  0;$
    
    \Repeat{$i = k$ }{
       \Switch{slot}{ 
         \SetKwSwitch{Switch}{Case}{}{}{}{slots}{}{}{}%
        \uCase{1 and 2}{ 
          listen
        }
        \SetKwSwitch{Switch}{Case}{}{}{}{end of phase:}{}{}{}%
        \uCase{}{
          \If{a beep was heard in any slot}{$beep \gets  true$}
          \If{a beep was heard in both slots}{$collision \gets  true$}
        }
      }
      $i \gets  i+1$
    }
    
    \textbf{End Procedure}
 \end{algorithm}

The value of $k$ depends on the bound we require on the probability of error,
a straightforward adaptation of the above analysis
gives us the values of Lemma~\ref{lem:emule} relative to a {\em single time step}.

\begin{lemma}\label{lem:emule}
For any $\varepsilon>0$, any  $n>0$ and any  $c>2$:
\begin{enumerate} 
\item if $k=\lceil \log\left(\frac 1 \varepsilon\right) \rceil$, the procedures are correct for a given node with probability $1-\varepsilon$
\item if $k=\lceil \log\left(\frac n\varepsilon\right) \rceil$, the procedures are correct for all nodes with probability $1-\varepsilon$
\item if $k = \lceil c\log(n)\rceil$,  the procedures are correct for all nodes {\it w.h.p.}
\end{enumerate}
\end{lemma}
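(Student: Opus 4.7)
My plan is to treat Lemma~\ref{lem:emule} as essentially a direct translation of Lemma~\ref{lemma::collision_local} and its corollary to the emulation setting. The only conceptual adjustment is that in the emulation procedures the $k$-bit sequence $s$ is generated once, at the start of the overall execution, rather than freshly at each invocation; but this change is immaterial when analysing a single time step, because what matters is the probability that two colliding nodes happen to hold the same bit in some sub-phase, and this probability only depends on the joint distribution of $(s_u[i], s_v[i])$, which is uniform on $\{0,1\}^2$.

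First, I would isolate the key probabilistic event. Fix a time step and consider the emulation of $B_{cd}$ (the argument for $L_{cd}$ is symmetric). A collision in a node $v$'s neighbourhood fails to be detected by $v$ during that step if and only if every pair of nodes beeping in $\overline{N}(v)$ chose the same slot within each of the $k$ sub-phases. For any fixed pair $(u_1, u_2)$ of such beepers, the events that they coincide in each of the $k$ sub-phases are independent Bernoulli(1/2) trials, so the probability of agreement in all $k$ sub-phases is exactly $2^{-k}$. Exactly as noted in the proof of Lemma~\ref{lemma::collision_local}, if more than two neighbours beep simultaneously the detection probability can only increase, so it suffices to bound the two-node case from above.

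From this single estimate, the three claims follow immediately:
\begin{enumerate}
\item For a given node $v$ (and a specific pair of colliding neighbours), the failure probability is $2^{-k}$. Taking $k=\lceil\log(1/\varepsilon)\rceil$ yields failure probability at most $\varepsilon$.
\item For all $n$ nodes, a union bound (as in the corollary of Lemma~\ref{lemma::collision_local}) gives a failure probability at most $n\cdot 2^{-k}$. Taking $k=\lceil\log(n/\varepsilon)\rceil$ yields at most $\varepsilon$.
\item Setting $k=\lceil c\log n\rceil$ with $c>2$ gives failure probability at most $n\cdot 2^{-c\log n}=n^{1-c}=o(n^{-1})$, which is the definition of \emph{w.h.p.}\ used in this paper.
\end{enumerate}

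The only step that requires any care is the first one, namely extracting the right probabilistic event from the emulation procedures and arguing that simultaneous beeps of more than two nodes in the same neighbourhood do not hurt detection; everything else is arithmetic and a union bound. I do not anticipate a hard part — this is labelled by the authors as a ``straightforward adaptation'' and my plan is to make that adaptation explicit in one or two lines referencing the earlier analysis.
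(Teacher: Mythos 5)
Your proposal is correct and follows exactly the paper's route: the paper itself dispatches Lemma~\ref{lem:emule} as a ``straightforward adaptation'' of Lemma~\ref{lemma::collision_local} and its corollary (the $2^{-k}$ failure probability for a colliding pair, a union bound over the $n$ nodes, and the observation that a fixed per-node sequence changes nothing for a single time step), which is precisely what you make explicit. No gaps; your part-(3) arithmetic $n\cdot 2^{-c\log n}=n^{1-c}=o(n^{-1})$ for $c>2$ matches the paper's definition of \emph{w.h.p.}
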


Observe that in general, the size of the network $n$ is not known to the nodes, which is an obstacle to achieving the second and third types of guarantee. 
However, it is reasonable in practice to assume that the nodes know an {\em upper bound} on $n$, e.g., when a network of wireless sensors is deployed. The upper bound may even be loose without much consequence: so long as it is polynomial in $n$, the slowdown factor remains logarithmic in $n$.

In a {\em complete} computation, each node may have a collision with any of its
neighbours. To ensure detection of all collisions, a single node must choose a
sequence amongst the $2^k$ possible ones different from the (at most) $\Delta$ sequences chosen by its
neighbours; that is,
different sequences must be chosen at the two ends of {\em all} edges.
The probability that this does not happen is at most $n\Delta/(2(2^k))$, which gives

\begin{lemma}\label{lem:emuleglobal}
For any $\varepsilon>0$, any  $n>0$, any  $c>2$ and any computation:
\begin{enumerate}
\item if $k=\lceil \log\left(\frac \Delta \varepsilon\right) \rceil$, the procedures are correct for a given node over the whole computation with probability $1-\varepsilon$
\item if $k=\lfloor \log\left(\frac {n \Delta}\varepsilon\right) \rfloor$, the procedures are correct for all nodes over the whole computation with probability $1-\varepsilon$
\item if $k = \lfloor c\log(n) + \log \Delta\rfloor$,  the procedures are correct for all nodes over the whole computation {\it w.h.p.}
\end{enumerate}
\end{lemma}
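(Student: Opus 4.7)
The plan is to reduce the ``correct over the whole computation'' event to a single static event about the random sequences $s_v\in\{0,1\}^k$, and then to apply a union bound and plug in the three specified values of $k$.

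First, I would argue the following key reduction: because each node $v$ draws its sequence $s_v$ once and for all at the start, a fixed pair $(u,v)$ will correctly handle every collision between them throughout the entire computation if and only if $s_u\neq s_v$. Indeed, if $s_u=s_v$, then in each sub-phase $u$ and $v$ beep in exactly the same slot of every invocation of either procedure, so a beeper never hears the other, and any listener hears a beep in only one of the two slots; the collision is invisible, forever. Conversely, if $s_u\neq s_v$ then at some coordinate $i$ where they disagree, one beeps in slot 1 and the other in slot 2, so the collision is exposed in sub-phase $i$ of every invocation. In particular, the probability that a fixed pair is ``bad'' is exactly $\Pr[s_u=s_v]=2^{-k}$, independently of the number of invocations.

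Second, I would union bound over the relevant pairs, as sketched in the prose preceding the statement. For Part~1, a given node $v$ is spoiled only if one of its at most $\Delta$ neighbours happens to have the same sequence, giving failure probability at most $\Delta\cdot 2^{-k}$; with $k=\lceil\log(\Delta/\varepsilon)\rceil$ we have $2^k\ge \Delta/\varepsilon$ so the bound collapses to $\varepsilon$. For Parts~2 and~3, I would union-bound over the at most $n\Delta/2$ edges of $G$, obtaining a total failure probability of at most $(n\Delta/2)\cdot 2^{-k}$. Plugging in $k=\lfloor\log(n\Delta/\varepsilon)\rfloor$ gives $2^k\ge n\Delta/(2\varepsilon)$, hence failure probability $\le \varepsilon$ for Part~2. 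Plugging in $k=\lfloor c\log n+\log\Delta\rfloor$ gives $2^k\ge n^c\Delta/2$, hence failure probability at most $n^{-(c-1)}$, which is $o(n^{-1})$ since $c>2$, i.e.\ {\it w.h.p.}\ for Part~3.

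The only step I expect to have to argue carefully is the ``if and only if'' in the first paragraph: one must be sure that any miscount of collisions over the course of the computation can be charged to a single pair $(u,v)$ with $s_u=s_v$, so that a union bound over edges (rather than over edges times the number of slots, which would blow up the bound) is legitimate. The point is that the random choices are frozen at the start, so the bad-pair event is a static property of the initial random assignment, not something that needs to be re-checked at each slot. Once this static characterisation is in place, the rest of the argument is routine bookkeeping with floors, ceilings, and the three instantiations of~$k$.
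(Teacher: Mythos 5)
Your proof is correct and follows essentially the same route as the paper: the paper likewise reduces correctness over the whole computation to the static event that the two endpoints of every edge drew distinct sequences, bounds the failure probability by $n\Delta/(2\cdot 2^{k})$ via a union bound over edges, and then instantiates the three values of $k$ exactly as you do. (One shared caveat: like the paper, you union-bound over edges, whereas the $L_{cd}$ emulation at a listener really requires distinct sequences for pairs of its beeping neighbours, which may be at distance two rather than adjacent; this only perturbs the constant inside the logarithm and is a simplification the paper itself makes.)
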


Finally, note that the emulation procedures should be used even when {\em listening},
in order for the nodes to remain synchronized with each other. 
Likewise, the procedures should not be interrupted even after a collision is detected in order to preserve synchrony.

\paragraph*{Resulting complexity.}
All algorithms presented in Section~\ref{sec:algorithms} can be adapted in $BL$ by replacing those beep or listen instructions which require collision detection by calls to {\tt EmulateB$_{cd}$inBL} or {\tt EmulateL$_{cd}$inBL}. In the worst case, all the slots need such an adaptation, resulting in a multiplicative logarithmic slowdown (according to the three different levels of guarantees stated in Lemma~\ref{lem:emuleglobal}).



\subsection{Optimality of the emulation}

In this section, we prove that the emulation procedures presented in Section~\ref{sec:emuler}
are essentially optimal (i.e. asymptotically and up to a constant factor), namely, we prove a $\Omega\left(\log n\right)$ lower bound on the number of slots
required to detect collision with high probability in some graphs called {\em wheels}.
A $(m,s)$-wheel,
illustrated in Figure~\ref{fig:wheel},
is a graph $W=(V,E)$  such that
$V=u_1,\ldots, u_{4ms}$, the edges $E$ are all the $(u_{i-1},u_i)$ (modulo $4ms$) 
plus $m$  spokes, that is edges $(u_{is},u_{(i+2m)s})$ ($1\leq i \leq 2m$),
where the wheel can be odd (all spokes with $i$ odd) or even (all spokes with $i$ even).
The even and odd $(m,s)$-wheels are isomorphic.
We consider only situations in which all vertices $u_{is}$  ($1\leq i \leq 4m$) are in the same state, a state in
which they wish to beep and all other vertices are in the same internal state, a state in
which they do not wish to beep. Thus vertices at the ends of spokes and no others must
conclude that there is a collision. The slot complexity of any algorithm which detects collision in such a graph with high probability is proven to be $\Omega(\log n)$. 

\begin{figure}
  \centering
  \begin{tikzpicture}[scale=1.7]
    \draw (0,0) circle (1);
    \draw (90:1) node[above] {$4ms$} -- (270:1.1) node[below] {$2ms$};
    \draw (70:.98) node[above right] {$1s$} -- (70:1.02);
    \draw (50:1) node[above right] {$2s$} -- (230:1) node[below left, inner sep=0pt] {$2ms+2s$};
    \draw (30:.98) -- (30:1.02);
    \draw (10:1) node[right,yshift=2pt,xshift=2pt] {$4s$} -- (190:1) node[left,xshift=-2pt] {$2ms+4s$};
    \draw (-15:1) node[right, xshift=2pt, rotate=-15] {$\vdots$};
    \draw (130:1) node[above left] {$2ms+2is$} -- (310:1) node[below right] {$2is$};
    \draw (-65:1) node[right, xshift=2pt, rotate=-70] {$\vdots$};
    \draw (105:1.1) node[above, rotate=115] {$\vdots$};
    \draw (250:.98) -- (250:1.02) node[below, xshift=-15pt] {$2ms+1s$};
    \draw (210:.98) -- (210:1.02);
    \draw (160:1) node[left, xshift=-2pt, rotate=-20] {$\vdots$};
  \end{tikzpicture}
  \caption{\label{fig:wheel} The wheel gadget used in the proof of optimality for emulation. }
\end{figure}
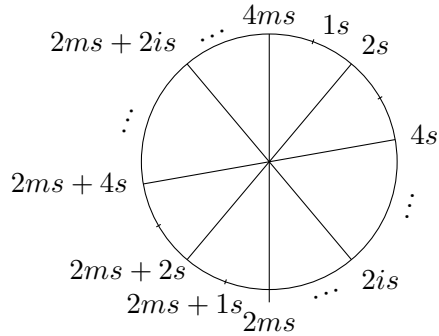

Considering a computation of a collision detecting algorithm on a wheel,
we define,
for any $t>0$,
$b_t^{i}$ as the signal (beep or not) from $u_i$ to all its neighbours at time $t$,
and, for any $t\geq 0$,
$B_t^{i}$ the sequence $b_1^{i}\cdots b_t^{i}$ .
Then, we define the event $E_t$ for a spoke $u_{is},u_{(i+2m)s}$ as follows: 
$$
E_t = 
(B_t^{{is}}=B_t^{{(i+2m)s}}).
$$
\begin{lemma}
\label{lem:wheel}
For any $t$ ($0 \leq t < s$), it holds that ${\mathbb P}r\left( E_t\right) \geq 2^{-t}.$
\end{lemma}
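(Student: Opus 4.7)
My plan is to prove the bound by induction on $t$. The base case $t = 0$ is immediate, since $B_0^{is}$ and $B_0^{(i+2m)s}$ are both empty sequences, so $E_0$ holds deterministically and $\Pr[E_0] = 1 = 2^0$.

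For the inductive step I factor $\Pr[E_t] = \Pr[E_t \mid E_{t-1}] \cdot \Pr[E_{t-1}]$ and aim to establish $\Pr[E_t \mid E_{t-1}] \ge 1/2$, which combined with the inductive hypothesis yields $\Pr[E_t] \ge 2^{-t}$. The structural ingredient is the graph automorphism $\sigma : u_j \mapsto u_{j + 2ms \bmod 4ms}$ of the $(m,s)$-wheel: this half-rotation exchanges the two endpoints of every spoke (in particular $u_{is} \leftrightarrow u_{(i+2m)s}$) and preserves the B-type/L-type partition of initial states. Because $t < s$ and the ring distance between any two distinct spoke endpoints is at least $2s$, no information emitted from any third spoke endpoint can have reached $u_{is}$ or $u_{(i+2m)s}$ within the first $t$ rounds; consequently, the $(t-1)$-neighborhoods of the two endpoints are mirror images of one another under $\sigma$, and the joint distribution of the local sub-computation on these neighborhoods is $\sigma$-invariant. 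Using this invariance, conditionally on $E_{t-1}$ the two decisions $b_t^{is}$ and $b_t^{(i+2m)s}$ are drawn from the same Bernoulli distribution with some common parameter $p$, using two private, independent coin flips at time $t$, so the probability of coincidence is at least $p^2 + (1 - p)^2 \ge 1/2$.

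The main technical obstacle I anticipate is making the ``same Bernoulli parameter'' step fully rigorous. The event $E_{t-1}$ asserts only equality of the emitted signal sequences of the two spoke endpoints, whereas the beeping probability at time $t$ in general depends on the full received view, which also includes the ring-side inputs and may a priori differ between the two endpoints. Overcoming this subtlety will require strengthening the induction invariant to additionally track that the ring-side received histories remain symmetrically distributed---using the $\sigma$-invariance of the local sub-computation and the safety margin $t < s$ to prevent far-side asymmetries from contaminating the two light cones---after which the factor $1/2$ at each step follows cleanly from the independence of the two endpoints' private coin flips at time $t$.
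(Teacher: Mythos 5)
Your proposal follows essentially the same route as the paper's proof: induction on $t$, establishing $\Pr(E_t \mid E_{t-1}) \ge 1/2$ via the identity $x^2+(1-x)^2 \ge 1/2$ for two conditionally independent Bernoulli decisions with a common parameter, justified by the symmetry between the two spoke endpoints and the locality bound $t<s$. The subtlety you flag (that $E_{t-1}$ equates only emitted sequences, not received views) is resolved in the paper by conditioning on the common emitted sequence $X$ and observing that $\Pr(b_t^{is} \mid B_{t-1}^{is}=X)$ is determined by the initial states of $u_{is-(t-1)},\dots,u_{is+(t-1)}$, which are identical at the two endpoints --- essentially the strengthened invariant you anticipate needing.
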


\begin{proof}
The proof proceeds by induction on $t$. Clearly the claim is true for $t=0$.
Suppose it is true for $t-1$.
The distribution of $B_t^{is}$ is determined by the initial states of
$u_{is-(t-1)} \cdots u_{is+(t-1)}$ since no $u$ at distance $t$ or more from $u_{is}$
can influence what happens at $u_{is}$ in time less than $t$.
Call this sequence of initial states $I$.

Hence for any sequence $X$ of $t-1$ beep/nobeeps, ${\mathbb P}r(b_t^{is} | B_{t-1}^{is} = X)$ is determined by $I$.
Hence for any $X$, ${\mathbb P}r(b_t^{is} | B_{t-1}^{is} = X)={\mathbb P}r(b_t^{(i+2m)s} | B_{t-1}^{(i+2m)s} = X)$
But by supposition $B_{t-1}^{is} =B_{t-1}^{(i+2m)s}$, so for any $X$,
${\mathbb P}r(b_t^{is} | B_{t-1}^{is} = X)={\mathbb P}r(b_t^{(i+2m)s} | B_{t-1}^{is} = X)$.
Let this conditional probability be $x$.
Then, ${\mathbb P}r(b_t^{is} =b_t^{(i+2m)s} | B_{t-1}^{is} = X) = x^2+(1-x)^2 \ge 1/2$
so that ${\mathbb P}r(E_t|E_{t-1} \wedge B_{t-1}^{is} = X) \ge 1/2$ and,
removing the conditioning, ${\mathbb P}r(E_t|E_{t-1}) \ge 1/2$ and so, by induction, ${\mathbb P}r(E_t) \ge 2^{-t}$.

%
%
%
\qed
\end{proof}

If $E_t$ holds for the spoke $(u_{is},u_{(i+2m)s})$,
we say that the spoke {\em fails} to break symmetry within time $t$.
This happens  with probability at least $2^{-t}$ and, if it happens,
the existence of the spoke has had no influence on the computation
up to time $t$.
In particular, whenever $u_{is}$ beeped, $u_{(i+2m)s}$ also beeped and so
neither has ever heard the other beep.
\begin{theorem}
\label{theorem::lb}
For any Monte Carlo algorithm $\cal A$ which 
detects collision in $W$, if $\cal A$ halts in less than $\log n/2$ slots
with probability greater than $3/4$ then for some situations in some wheels,
$\cal A$ gives incorrect results for some nodes with probability greater than $1/4$.
\end{theorem}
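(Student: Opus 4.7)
The plan is a contradiction argument that couples two isomorphic wheels and uses Lemma~\ref{lem:wheel} on many independent spokes to amplify a tiny per-spoke error into a constant one.

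I fix a time bound $t=\lfloor(\log n)/2\rfloor$, so the hypothesis ensures $\mathcal{A}$ halts by time $t$ with probability greater than $3/4$, and I choose wheel parameters with $n=4ms$, $m=\Theta(2^t)=\Theta(\sqrt n)$ (concretely $m\ge\lceil 2^t\rceil$), and $s=\Theta(\sqrt n)$; for $n$ large enough, $t<s$. I then consider the odd and the even $(m,s)$-wheels $W_o$ and $W_e$. They are isomorphic as graphs but have disjoint spoke-endpoint sets: for each odd $i$, $u_{is}$ is a spoke endpoint in $W_o$ (correct output ``collision'') and a non-endpoint in $W_e$ (correct output ``no collision'').

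The crux is an indistinguishability coupling. For each odd $i\in\{1,3,\ldots,2m-1\}$ and the spoke $e=(u_{is},u_{(i+2m)s})$ of $W_o$, Lemma~\ref{lem:wheel} tells us that with probability at least $2^{-t}$ the spoke fails to break symmetry by time $t$, and on this event $u_{is}$'s entire computation proceeds as if $e$ were absent. Because $t<s$, the cycle segment visible to $u_{is}$ in $W_o\setminus\{e\}$ within radius $t$ contains no other spoke endpoint; the same holds for $u_{is}$ in $W_e$. Both local views therefore reduce to the same labelled cycle segment and, the algorithm being purely local, the conditional distribution of $u_{is}$'s output on $W_o$ given that $e$ fails equals its distribution on $W_e$. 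Hence the error event $E_i=\{u_{is}\text{ outputs ``no collision'' on }W_o\}$ satisfies
\[
\Pr[E_i]\ \ge\ \Pr[e\text{ fails}]\cdot\Pr[u_{is}\text{ outputs ``no collision'' on }W_e]\ \ge\ 2^{-t}\cdot\tfrac34,
\]
where the second factor uses the assumption (toward contradiction) that $\mathcal{A}$ errs on $W_e$ with probability at most $1/4$.

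To boost this to a constant, I keep only one endpoint per spoke, $i\in\{1,3,\ldots,2m-1\}$. Any two distinct such $i$'s yield spokes whose four endpoints lie at pairwise cycle-distance at least $2s>2t$, so their ``dumbbells'' (the $t$-neighbourhoods of the two spoke endpoints) are node-disjoint and the events $E_i$ depend on disjoint random bits; in particular they are mutually independent. Consequently,
\[
\Pr\!\Bigl[\bigcup_i E_i\Bigr]\ \ge\ 1-\bigl(1-\tfrac34\cdot 2^{-t}\bigr)^{m}\ \ge\ 1-\exp\!\bigl(-\tfrac34\,m\,2^{-t}\bigr)\ \ge\ 1-e^{-3/4}\ >\ \tfrac14,
\]
which contradicts the assumption that $\mathcal{A}$ errs on $W_o$ with probability at most $1/4$. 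Hence some wheel must exhibit an error with probability greater than $1/4$.

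The main technical hurdle is justifying the indistinguishability coupling carefully, i.e.\ showing that the conditional distribution of $u_{is}$'s output on $W_o$ given that the spoke fails really equals its output distribution on $W_e$; this relies on the symmetry between the two endpoints of the spoke (whose isolated cycle-neighbourhoods are isomorphic for $t<s$) and is in the same inductive spirit as the proof of Lemma~\ref{lem:wheel}. The geometric prerequisites, that distinct spokes have disjoint dumbbells and that no other spoke is reachable from $u_{is}$ in time $t<s$, are immediate once $n$ is taken large enough to ensure $t<s$.
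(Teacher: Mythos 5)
Your overall plan (the wheel gadget, Lemma~\ref{lem:wheel}, independence across well-separated spokes, amplification over $m\approx 2^t$ spokes) matches the paper's, but the central per-spoke inequality is invalid. You write $\Pr[E_i]\ \ge\ \Pr[e\text{ fails}]\cdot\Pr[u_{is}\text{ outputs ``no collision'' on }W_e]$, justified by the claim that the conditional distribution of $u_{is}$'s output on $W_o$ \emph{given that the spoke fails} equals its unconditional output distribution on $W_e$. This is false in general. On the failure event the two halves of the dumbbell evolve as two independent copies of the same spokeless segment; writing $\phi(X)$ for $u_{is}$'s beep transcript as a function of the random tapes $X$ in its half, the failure event is $\{\phi(X)=\phi(Y)\}$ with $Y$ an independent copy, so conditioning on it reweights each transcript $\beta$ proportionally to $\Pr[\phi=\beta]^2$ rather than $\Pr[\phi=\beta]$. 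An algorithm can correlate its output with this reweighting: if, say, $u_{is}$ beeps at step~1 with probability $1/10$ and outputs ``no collision'' exactly when it beeped, then the unconditional probability of ``no collision'' is $1/10$ while the conditional probability given failure is $0.01/0.82\approx 1/82$. In general the conditional probability can be smaller than the unconditional one by an arbitrarily large factor, so $\Pr[E_i]\ge \tfrac34\,2^{-t}$ does not follow, and no per-spoke multiplicative bound of this shape can be salvaged (the additive alternative $\Pr[e\text{ fails}]-\Pr[\text{error or non-halt on }W_e]\approx 2^{-t}-\tfrac12$ is useless since $2^{-t}=\Theta(1/\sqrt n)$). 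A secondary issue: your second factor should in any case be restricted to executions that halt by time $t$, since indistinguishability only holds up to time $t$.

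The paper's proof avoids conditioning altogether, and the order of operations is the point. It first takes the union of the failure events over all $m$ spokes, choosing $m$ large enough that with probability greater than $3/4$ \emph{some} spoke fails (this is where your independence observation for spokes at cycle-distance $\ge 2s>2t$ is used); it then intersects with the halting event to retain probability greater than $1/2$; and only then does it invoke indistinguishability: on that event the failing spoke's endpoint produces the \emph{same} output in the coupled odd and even executions and is therefore wrong in exactly one of them, so by pigeonhole one of the two wheels exhibits an error with probability greater than $1/4$. Restructuring your argument in that order — union of failure events first, then the ``wrong in one of the two wheels'' dichotomy, then pigeonhole — repairs the proof without ever needing the conditional-distribution claim.
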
 

\begin{proof}
The proof proceeds using the wheel gadget of Figure~\ref{fig:wheel} and Lemma~\ref{lem:wheel} to characterize the rate at which the symmetry induced by the spokes can be broken. For simplicity we
consider wheels $(m,s)$ where $s$ is a power of $2$ and $m=2^{2s-2}/s$ so that
$s=\log n/2$.
We consider a computation on this wheel without specifying whether it is
the odd or even wheel.
By Lemma~\ref{lem:wheel}, the probability that a given spoke $i$ breaks symmetry within time
$s-1$ is at most $1-2^{1-s} < exp(-2^{1-s})$
and this is independent for all spokes
so that the probability that every spoke breaks symmetry in the even case
in time $s-1$, is at most $exp(-2^{1-s}m)=exp(-2^{s+1}/s)<1/4$.
Hence 
the probability that the algorithm halts and
some spoke fails to break symmetry is greater than $1/2$.
If, in the even case,  spoke $i$ fails to break symmetry,
vertex $u_i$ hears the same signals from its neighbours in the odd and even cases
and, so, if it terminates the algorithm in this time, it has the same probability
of deciding collision in the two cases. Hence it gives a wrong result in one case
with probability at least $1/2$. Hence there is a vertex which gives a wrong result in
the odd or even case
with probability greater than $1/4$.

Finally, if an algorithm halts in time $o(\log n)$ with probability $\geq 3/4$,
for sufficiently large $n$
it halts in time less than $s$ and so its probability of
giving an incorrect result is at least $1/4$ for some initial conditions.
It follows that the same is true for any algorithm halting in expected time $o(\log n)$.\qed
\end{proof}

\begin{corollary}
The complexity of a Monte Carlo algorithm which detects collision with high probability in the $BL$ model is $\Omega(\log n)$.
\end{corollary}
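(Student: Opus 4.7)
My plan is to derive the corollary as an almost immediate consequence of Theorem~\ref{theorem::lb}, which was proved using the wheel gadget of Figure~\ref{fig:wheel}. Recall that in this paper ``with high probability'' means probability $1-o(n^{-1})$, so in particular any algorithm which is correct \emph{w.h.p.} is correct with probability greater than $3/4$ for all sufficiently large $n$.

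Concretely, I would argue by contradiction. Suppose there exists a Monte Carlo algorithm $\mathcal{A}$ in the $BL$ model which detects collision \emph{w.h.p.} in $o(\log n)$ slots. Then for all sufficiently large $n$, two conditions hold simultaneously: (i) $\mathcal{A}$ halts in fewer than $\log n / 2$ slots with probability greater than $3/4$ (in fact with probability $1-o(n^{-1})$, since its running time is bounded), and (ii) on every input, $\mathcal{A}$ outputs the correct answer on every node with probability at least $1 - o(n^{-1}) > 3/4$.

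I would then apply Theorem~\ref{theorem::lb} to the family of wheel graphs $W=(m,s)$ with $s=\log n/2$ used in its proof. The theorem asserts that under condition~(i), there exist wheels and initial situations on which $\mathcal{A}$ produces an incorrect answer at some vertex with probability greater than $1/4$. This directly contradicts (ii), so no such algorithm can exist, and the slot complexity of any Monte Carlo collision detection algorithm in $BL$ is $\Omega(\log n)$.

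The only subtlety to be careful about is the quantifier order: Theorem~\ref{theorem::lb} guarantees the existence of \emph{some} wheel and \emph{some} initial configuration where $\mathcal{A}$ fails with probability exceeding $1/4$, and this wheel has $n=4ms$ nodes matching the chosen $s=\log n/2$. Since the impossibility is witnessed by an explicit graph family that can be constructed for infinitely many values of $n$, the $\Omega(\log n)$ lower bound genuinely applies as $n \to \infty$, completing the argument. There is no substantial technical obstacle beyond checking that the parameters from the theorem line up with the standard \emph{w.h.p.} convention.
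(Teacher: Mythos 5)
Your derivation is correct and matches the paper's intent: the corollary is stated without separate proof precisely because it follows from Theorem~\ref{theorem::lb} by the contrapositive argument you give (w.h.p.\ correctness implies correctness with probability greater than $3/4$, so an algorithm halting in $o(\log n)$ slots would contradict the theorem on the wheel family), and indeed the closing paragraphs of the theorem's own proof already spell out this final step. Your attention to the quantifier order and to the wheel family existing for infinitely many $n$ is consistent with what the paper does.
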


\section{Complexity analysis}
\label{sec:analysis}

This section presents three complexity analyses, two of which are very generic. Taken together, the analyses cover (possibly indirectly) all the algorithms presented in this paper. We first present a new analysis of the MIS algorithm by Jeavons et al.\cite{JSX16}, proving that the number of phases before all nodes terminate is less than $76 \log n$ with high probability
As already discussed, the analysis in~\cite{JSX16} only guarantees termination within $e^{25} \log n$ phases, which is impractical, although the authors observe experimentally that the actual running time is much lower. Our analysis confirms this fact. The same analysis extends to the 2-hop version of MIS. Next, we analyse the colouring algorithm (Algorithm~\ref{algo:colouring}) and prove that it terminates within $O(\Delta + \log n)$ phases {\it w.h.p.}, where $\Delta$ is the maximum degree. This analysis extends in turn to the 2-hop variant of colouring (Algorithm~\ref{algo:2-hop-colouring}) by replacing $\Delta$ with $\Delta^2$ (neighborhood in the square of the graph), as well as degree computation (Section~\ref{sec:degree}). Finally, we analyse in Section~\ref{sec:analysis-colouring-K} the ($K+1$)-colouring algorithm with a known bound $K \ge \Delta$ (Algorithm~\ref{algo:colouring-K}). This algorithm is shown to terminate within $O(K\log n)$ phases {\em w.h.p.}, and the analysis extends to $O(K^2\log n)$ phases in the 2-hop variant (again, due to considering the square of the graph).

\subsection{Common definitions and notations}
The three analyses presented here share a number of common traits. First, all processes consist in a competition among neighbours for producing exclusive beeps (with various consequences). 

\paragraph{Genericity of the analyses.} The principle at stake in the first analysis (MIS algorithm) is quite generic: nodes compete to produce exclusive beeps, and when one succeeds, this node and all its neighborhood terminate. In the second analysis (colouring algorithm), the principle is similar except that only the node producing the exclusive beep terminates, the others continuing competition. The third analysis is more specific to the problem considered ($(K+1)$-colouring). In all cases, we analyse the time it takes until {\em all} nodes have terminated.

\paragraph{Residual graph.} We call {\em residual graph} at a given phase, the graph induced by the nodes that are still contending for producing exclusive beeps. As such, it is initially the same as the original graph, and becomes eventually empty. At any point of the analysis, we denote by $d$ the current degree of a node in the residual graph. We say that a node {\em survives} so long as it is in the residual graph.

\paragraph{Adaptive probability.} All algorithms consider an adaptive probability (see pattern in Section~\ref{sec:patterns}). The use of an adaptive probability is what makes the algorithms efficient. However, it is what makes their analysis more complicated. The first two algorithms are analysed using a doubling/halving pattern, that is, when $p$ is increased or decreased, it is so by a factor $2$, {\em with upper bound} and {\em initial value} $1/2$. Other general strategies exist, such as incrementing or decrementing the denominator of $p$ (see e.g.~\cite{CMRZ16c}), they are not considered here. In the third analysis (which is less generic), we use part of the input as an ingredient for the update of the adaptive probability.

\paragraph{Local beeping probability.} At a given phase, we write $p_v$ for the probability that a given node $v$ beeps and we use $q_v = \sum_{u\in N(v)}p_u$ as a trivial upper bound on the probability that at least one of its neighbor beeps. We omit the subscript $v$ when it is clear from the context. Intuitively, the higher the value $p$ and the lower $q$ at a node, the more likely is it that this node produces an exclusive beep. 

\subsection{MIS algorithm (Jeavons et al.~\cite{JSX16})}
\label{sec:analysis-MIS}

In this algorithm, when a node produces an exclusive beep, it enters the MIS and terminates. All of its neighbors which were still active decide not to be in the MIS and terminate as well. We prove that this algorithm terminates in less than $76 \log n$ phases with probability $1-o(n^{-1})$. Note that the asymptotic order is already optimal, since the $\Omega(\log n)$ lower bound for colouring in the message passing model with constant size messages~\cite{KOSS} applies to the MIS (see~\cite{MRSZ10} for details), and this model is stricly stronger than any version of the beeping model.

\paragraph{Outline of the analysis.} The main result (Theorem~\ref{th:MIS}) is a bound on the probability that a given node, with given $p$ and $q$, remains active over the next $t$ phases (that is, neither this node or one of its neighbors beep exclusively). This proof relies on an intermediate result (Lemma~\ref{lem:MIS}) which establishes that whatever the events occurring in a given phase in the surroundings of a given node $v$, it is sufficient to focus on the special case that $q_v$ is halved in the next phase (thus conditions improve as time passes). Finally, based on Theorem~\ref{th:MIS}, we conclude in Corollary~\ref{cor:MIS} that a certain number of phases ($76 \log n$) are sufficient to guarantee that the execution terminates at a given node with probability $1 - o(n^{-2})$, and thus everywhere with probability $1 - o(n^{-1})$.

\paragraph{Detailed proof.}

At any time, the probability that a node survives in the next $t$ phases depends on the current value of its variables $p$ and $q$. We define $t_0 = 3\,l(q) - 2 \log p$, where $l(q)$ equals $\log (5\max\{q,1/5\}) = \max\{\log(5q),0\}$. (This technical adjustment prevents the occurrence of negative values later on.) Our main result is the following theorem.
\begin{theorem}
\label{th:MIS}
For any $t\ge 0$ and node $v$, its probability of remaining
active after the next $t$ phases is at most $\alpha^{t_0-t}$ with $\alpha=2^{1/36} \approx 1.01944$.
\end{theorem}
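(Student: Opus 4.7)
The plan is to prove the bound by induction on $t$. The base case $t=0$ is immediate: every $p$ satisfies $p \le 1/2$, so $-2\log p \ge 2$, and $l(q) \ge 0$ by construction, hence $t_0 \ge 2$ and $\alpha^{t_0} \ge 1$ trivially upper-bounds any probability.

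For the inductive step, I would condition on the random events that occur during the next phase in $\overline{N}(v)$. These determine whether $v$ remains active and the new values $p'$ and $q'$ that govern the remaining $t-1$ phases. There are three broad outcomes: some node in $\overline{N}(v)$ beeps exclusively, in which case $v$ terminates and contributes nothing to the survival probability; no node in $\overline{N}(v)$ beeps (probability $(1-p)\prod_{u\in N(v)}(1-p_u)$), in which case every $p_u$ doubles up to the cap $1/2$, so $p' \ge p$ and $q' \le 2q$; and at least two nodes beep without exclusivity, in which case beeping $p_u$'s are halved while silent ones' are doubled. The third case is delicate because the evolution of $q_v$ depends on many entangled random variables, and this is precisely where Lemma~\ref{lem:MIS} is invoked: it reduces the analysis to the worst-case scenario in which $q$ is effectively halved, the least favorable configuration for the induction.

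Once Lemma~\ref{lem:MIS} is in hand, the induction reduces to verifying $\mathbb{E}\bigl[\alpha^{t_0' - t_0} \cdot \mathbf{1}_{\text{survive}}\bigr] \le \alpha^{-1}$, where $t_0' = 3\,l(q') - 2\log p'$ denotes the potential of the post-phase state. Expanding this with the two representative post-phase states $(2p, 2q)$ and $(p/2, q/2)$, weighted by event probabilities that are essentially polynomial in $p$ and $q$, yields a pointwise inequality on the admissible domain $(p,q) \in (0, 1/2] \times [0, \Delta/2]$. The coefficients $3$ and $2$ in the definition of $t_0$, the threshold $5$ inside $l(q)$, and the value $\alpha = 2^{1/36}$ are precisely tuned so that this inequality holds throughout this region; any weaker constant would simply widen the slack.

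The main obstacle I anticipate is twofold. First, Lemma~\ref{lem:MIS} itself must be proved carefully: the $p_u$'s of $v$'s neighbors are updated based on the behavior of $v$'s two-hop neighborhood, so showing that the worst case is uniform halving requires either a monotonicity argument on $\alpha^{t_0'}$ across all admissible outcomes, or a careful averaging that absorbs the dependence. Second, the pointwise verification of the main inequality must handle the non-smooth behavior at the threshold $q = 1/5$, where $l(q)$ transitions from $0$ to $\log(5q)$, and at $p = 1/2$, where doubling is clipped; both force separate subcases. Once these are settled, the downstream corollary follows by choosing $t$ so that $t - t_0 \ge 76\log n$, which makes $\alpha^{t_0 - t} = o(n^{-2})$, and union bounding over the $n$ nodes to obtain the global $o(n^{-1})$ termination guarantee.
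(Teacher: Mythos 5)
Your plan matches the paper's proof essentially step for step: induction on $t$ with the base case following from $t_0\ge 2$, an appeal to Lemma~\ref{lem:MIS} to reduce to the worst case in which $q$ is halved, a case analysis split at the thresholds on $q$ (the paper uses both $1/5$ and $2/5$) and at $p=1/2$, and the same union-bound corollary. The only slip is listing $(2p,2q)$ as a representative post-phase state --- once the lemma is applied the worst case has $q$ halved, so the relevant states are $(2p,q/2)$ and $(p/2,q/2)$ --- but this does not change the structure of the argument.
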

\begin{proof}
The proof will be by induction on $t$. We have $t_0\ge 2$, so that if $t=0$,
$\alpha^{t_0-t} >1$ and the claim is trivially true.
Now, let $t>0$. After one phase which does not add $v$ or a neighbour to
the MIS we have by induction that the probability of remaining active
for the following $t-1$ phases is at most $\alpha^{t^\prime_0-t+1}$
where $t^\prime_0$ is the new value of $t_0$, namely $ 3
l(q^{\prime})- 2\log p^\prime$.  So we conclude that the probability
of survival over the next $t$ phases is upper bounded by the weighted mean of $\alpha^{t^\prime_0-t+1}$ if $v$ survives the first phase and $0$
otherwise. We refer to this mean as the {\it bound} and note that it
is dependent on what happens outside  $\overline{N}(v)$
as well as the choices of all nodes in $\overline{N}(v)$.
\end{proof}

The rest of the proof relies on an intermediate Lemma, which we prove next. Let us first define a useful concept in this direction.

\begin{definition}[Inhibition] A node is said to be {\em inhibited} in a phase if at least one of its neighbours beeps in that phase.
\end{definition}

We will decline a number of cases and subcases, with frequent operations on exponents and logarithms. In particular, note that $\alpha^{3\log q}=q^{3\log\alpha}=q^{1/12}$.
\begin{lemma}
  \label{lem:MIS}
The bound is maximised when what happens outside the neighbourhood of
$v$ is that every neighbour $u$ of $v$ is inhibited from joining the
MIS by some external neighbour beeping and no neighbour of $v$
becomes inactive through another node (outside $\overline{N}(v)$)
joining the MIS.
\end{lemma}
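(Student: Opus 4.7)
The plan is to view the bound as an expectation over the internal randomness of the nodes in $\overline{N}(v)$, parametrised by the external configuration (the beep/listen behaviour of nodes outside $\overline{N}(v)$), and to show that among all external configurations, the one described in the lemma is the one that maximises this expectation. I will do this by exhibiting two ``monotone'' modifications of an arbitrary external configuration, each of which provably cannot decrease the bound, and whose joint application yields the described worst case: (a) replace every external MIS event adjacent to $N(v)$ by an external ``collision beep'', and (b) add, for every $u\in N(v)$ whose external neighbourhood is otherwise silent, at least one external beep in $N(u)\setminus\overline{N}(v)$ (necessarily not alone, by (a)).

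For (a), suppose some external $w \in N(u)\setminus\overline{N}(v)$ beeps alone, so that $w$ enters the MIS and $u$ becomes inactive; the contribution $p_u$ is thus removed from $q_v'$. Modify the configuration by having one further distant external node beep in $N(w)$ so that $w$ now has a collision. Under the modified configuration, $u$ is still inhibited (so $v$'s survival probability is unchanged or larger) but $u$ remains active with $p_u$ halved; hence $p_u/2$ is reinstated into $q_v'$. Since $t_0' = 3 l(q') - 2\log p_v'$ is non-decreasing in $l(q')$, we have $\alpha^{t_0'-t+1}$ can only increase, so the bound does not decrease.

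For (b), fix all other choices in the phase and consider the marginal contribution of a particular $u\in N(v)$. Without external inhibition, two bad effects are possible: if $u$ beeps and no internal neighbour of $u$ beeps, then $u$ joins the MIS and $v$ is eliminated (a $0$ contribution); if $u$ does not beep and no internal neighbour beeps either, then $u$ hears silence and $p_u$ is doubled, increasing $q_v'$ and hence $\alpha^{t_0'-t+1}$. With external inhibition, $u$ can never beep alone, so the first event becomes a surviving configuration with $p_u$ halved (a strictly positive contribution); $u$ always hears a beep, so $p_u$ halves instead of doubles. The lemma reduces to showing that the positive gain from the newly surviving events dominates the loss in $\alpha^{t_0'-t+1}$ coming from the smaller $q_v'$. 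This is verified by expanding the marginal as a small sum over the coin flip of $u$ and the local event ``some internal neighbour of $u$ beeps'', and checking a one-variable inequality in $p_u$.

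The main obstacle is precisely this quantitative trade-off in step (b): it is a tight inequality and the constants $3$, $2$ in the definition $t_0 = 3 l(q) - 2\log p$, the value $\alpha = 2^{1/36}$, and the truncation $l(q) = \max\{\log(5q), 0\}$ (which prevents pathological behaviour when $q$ is small) are calibrated exactly so that the inequality holds uniformly. Once this technical inequality is in hand, iterating transformations (a) and (b) over all $u\in N(v)$ yields the described configuration, and the lemma follows.
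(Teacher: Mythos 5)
Your high-level strategy is the paper's own: hold the beeping decisions of the nodes of $\overline{N}(v)$ fixed, compare an arbitrary external behaviour with one in which one more neighbour $u$ is inhibited, argue the bound cannot decrease, and iterate over the uninhibited neighbours. Your step (a) likewise matches the paper's one-line observation that an external node joining the MIS only removes mass from $q$, and since the bound is increasing in $q$ the maximum is attained when no neighbour of $v$ is eliminated externally. So the decomposition is right; the issue is that the decisive step is not carried out.

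The genuine gap is in step (b), which you correctly identify as the crux but leave unproved, and which you mischaracterise as ``a one-variable inequality in $p_u$''. Before any inequality can even be written down, one must quantify how $u$'s change of status propagates to the \emph{other} ingredients of the exponent: $v$'s own updated probability satisfies $p_1 \le p_0$, and the updated $q'$ in the three surviving outcomes satisfies $q_1 \ge q_0/4$, $q_2 \ge q_0/4$ and $q_2 \ge q_0 - 3p_u/2$ (because $u$'s beep can flip a common neighbour $w$ from doubling to halving $p_w$, and inhibition flips $u$ itself from doubling to halving $p_u$). Your proposal never mentions these couplings, yet without them the ``loss in $q'$'' you need to dominate is not even bounded. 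Once they are in place, the ratio of the inhibited to the uninhibited bound becomes a two-parameter inequality in $(p_u,q_0)$, and the paper verifies it by a four-way case split ($q_0\le 1/5$; $q_0>1/5$ with $p_u\ge 1/8$; $1/5<q_0\le 4/5$ with $p_u\le 1/16$; $q_0>4/5$ with $p_u\le 1/16$), using among other things that $(1-x)^{1/12}>1-\tfrac{4}{3}\cdot\tfrac{x}{12}$ for $0<x\le 15/32$ and that $p_u$ is a power of $1/2$. Since the entire lemma rests on exactly this calibration of $\alpha=2^{1/36}$ and of the coefficients $3$ and $2$ in $t_0$, asserting that ``the positive gain dominates the loss'' without these estimates leaves the proof incomplete at its essential step.
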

\begin{proof} 
Clearly a node outside $\overline{N}(v)$ joining the MIS can only
affect the bound by reducing $q$ which reduces the bound.
Consider any external behaviour $E$ in which some $u$ is not
inhibited; we will show that the bound is increased or unchanged if
the behaviour is changed to $E^\prime$ in which $u$ is inhibited and
there is no change for any other neighbours of $v$.  (In a given graph
there may be no such $E^\prime$ but we consider the maximum possible
over any graph containing the neighbourhood $\overline{N}(v)$.)  We
consider fixed beeping decisions of all nodes in $\overline{N}(v)$
except $u$ and show that with these decisions $E^\prime$ gives a value
of the bound greater than or equal to that of $E$.  We consider three
cases:

\begin{itemize}
\item Some neighbour of $v$ which is neither $u$ nor a neighbour of
  $u$ enters the independent set: Note that this is determined 
  by the fixed beeping decisions and the external behaviour other than
  as it affects $u$. Hence this happens for $E$ iff it also happens
  for $E^\prime$ and in each case the bound is $0$.
\item Some neighbour of $u$ in $\overline{N}(v)$ beeps: $p_u$ will
  be halved whether or not $u$ is inhibited by $E^{\prime}$ and so
  both $p^\prime$ and $q^\prime$ and the probability of survival are
  the same for $E$ and $E^\prime$. The bound is identical in the two
  cases.
\item Otherwise: 
Let the value of $p^\prime$ be $p_0$ if $u$ does not beep and $p_1$ if
$u$ does beep ($p_1 \le p_0$).
Let the value of $q^\prime$ be $q_0$ if $u$ does not beep and is not
inhibited, $q_1$ if it beeps and is inhibited and $q_2$ if it does not
beep and is inhibited.  Note that if $u$ beeps and is not inhibited,
$u$ enters the independent set and $v$ does not survive.  We have
$q_1\ge q_0/4$ since, at most, $u$'s beeping can result in a node
$w$ halving $q_w$ when otherwise it would have doubled it. Similarly
$q_2\ge q_0/4$ and $q_2 \geq q_0-3p_u/2$ since the inhibition results in
$p_u$ being halved rather than potentially doubled.

The bounds are thus $p_u\alpha ^ { 3 l(q_1)- 2 \log(p_1)-t+1}+
(1-p_u)\alpha ^ { 3 l(q_2)- 2 \log(p_0)-t+1}$ in the inhibited case
and $(1-p_u)\alpha ^ { 3 l(q_0)- 2 \log(p_0)-t+1}$ in the
uninhibited case.  We claim that the ratio of the inhibited bound to
the uninhibited is at least $1$.  Since $p_1\le p_0$, this ratio is at least equal to $\frac{p_u\alpha^{ 3 l(q_1)}+(1-p_u)\alpha^{ 3 l(q_2)}} {(1-p_u)\alpha^{ 3 l(q_0)}}$.\\ 
Remember that $p_u$ is a power of $1/2$.  We
consider four subcases:
\begin{itemize}
\item $q_0 \le 1/5$: $l(q_1)=l(q_2)=l(q_0)=0$ and the ratio
$\ge (p_u+1-p_u)/(1-p_u) > 1$.

\item $1/5 < q_0$ and $p_u\ge 1/8$: We use the bounds $q_1 \ge q_0/4$ and
$q_2 \ge q_0/4$ giving that the
ratio is at least
$(p_u+1-p_u)\alpha^{-6}/(1-p_u)$
$= \alpha^{-6}/(1-p_u) \ge \alpha^{-6}(8/7) \ge 1$.
\item $1/5 <q_0 \le 4/5$ and $p_u \le 1/16$: We use the bounds $q_1
  \ge q_0/4$ and $q_2 \ge q_0 -3p_u/2$ and the fact that for $0<x \le
  15/32$, $(1-x)^{1/12} > 1-4/3(x/12)$ so that the ratio is at least
  $p_u\alpha^{-6}/(1-p_u) + (1-3p_u/2q_0)^{3\log\alpha}$ $\ge
  p_u\alpha^{-6} + (1-15p_u/2)^{1/12}$ $\ge p_u\alpha^{-6} +
  (1-(15p_u/2)/12\times (4/3))$ $\ge 1 + p_u(\alpha^{-6} - 5/6) > 1$.

\item $q_0 > 4/5$ and $p_u\le 1/16$: Using the same bounds as in the
  previous subcase the ratio is greater than
  $\frac{p_u}{1-p_u}\alpha^{ -6} + \alpha^{ 3 (l(q_0-3p_u/2)-l(q_0))}$
  $>\frac{p_u}{1-p_u}\alpha^{-6 } + \alpha^{ 3
    (l(4/5-3p_u/2)-l(4/5))}$ and this is the bound already used for
  the case with $q_0=4/5$ and the same value of $p_u$ and so is
  greater than or equal to $1$.
\end{itemize}
\end{itemize}
This ends the proof that $E^\prime$ gives a value for the bound at
least as great as that for $E$. The lemma is then proved by a simple
induction on the number of uninhibited nodes. \qed
\end{proof}

\begin{remark}
  In the situation described by Lemma~\ref{lem:MIS}, the value of $q$ is halved
unless $v$ joins the MIS.
\end{remark}

We now return to the inductive proof of Theorem~\ref{th:MIS}.
From Lemma~\ref{lem:MIS} we 
know that the survival probability is at most the weighted sum of
$\alpha^{ 3  l(q/2) - 2  \log p^{\prime}-t+1}$ if $v$ survives the first phase
and $0$ otherwise.
We consider the following five cases, which cover all possibilities.

\begin{enumerate}
\item $q \ge 2/5$: We have $l(q/2)= l(q)-1$ and $p^\prime \ge p/2$ giving
$P(survival) \le \alpha^{ 3  (l(q)-1)- 2 (\log p-1)-t+1}$
$=\alpha^{ 3  l(q)- 2 (\log p)-t}$
as claimed.

\item $1/5 \le q < 2/5$ and $p<1/2$: The probability that a neighbour
  of $v$ beeps is less than $q$ so that $p_v$ is doubled with
  probability at least $1-q$ and halved in the remaining cases. In all
  cases $l(q/2)=0$.  Hence $P(survival) \le
  \alpha^{-2\log(p)-t+1}((1-q)\alpha^{- 2 } +q\alpha^{ 2 })$ and our
  claim is that it is at most
  $\alpha^{3\log(5q)-2\log(p)-t}$. That is the claim is valid
  since $(1-q)\alpha^{- 1 } +q\alpha^{ 3 }\le\alpha^{3\log(5q)}$ in
  the range $1/5 \le q < 2/5$.  (It is valid at $q=1/5$ since
  $4\alpha^{-1}+\alpha^3<5$ and at $q=2/5$ since
  $3\alpha^{-1}+2\alpha^3<5\alpha^3$; between these two limits, the
  left hand side is linear and the right hand side
  ($(5q)^{3\log\alpha}$) has a negative second derivative so the
  inequality holds there also.)

\item $1/5 \le q < 2/5$ and $p=1/2$: With probability greater than
  $1-q$ no neighbour of $v$ beeps and then $v$ has probability $1/2$
  of entering the independent set; otherwise $p_v$ remains $1/2$. On
  the other hand, if a neighbour does beep, $p_v$ becomes $1/4$.  In
  all cases $l(q/2)=0$.  Thus the probability of survival $\le
  \alpha^{2-t+1}((1-q)/2+q\alpha^2)$ and the claim is that it is at
  most $\alpha^{3\log(5q)+2-t}$. That is the claim is valid if
  $(1-q)\alpha/2+ q\alpha^{ 3 }\le\alpha^{3\log(5q)}$ a weaker
  condition than in the previous case.

\item $q < 1/5$ and $p<1/2$: The probability that a neighbour of $v$
  beeps is less than $1/5$ so that $p_v$ is doubled with probability
  at least $4/5$ and halved in the remaining cases. In all cases
  $l(q)$ decreases or is unchanged.  Hence $P(survival) \le \alpha^{ 3
    l(q) - 2 \log(p)-t+1} ((4/5)\alpha^{- 2 } +(1/5)\alpha^{ 2 })$
  and this is less than $\alpha^{ 3 l(q) - 2 \log p-t}$ as claimed,
  again since $4\alpha^{- 1 }+ \alpha^{ 3 }<5 $.

\item $q < 1/5$ and $p=1/2$: With probability greater than $4/5$ no
  neighbour of $v$ beeps and then $v$ has probability $1/2$ of
  entering the independent set; otherwise $p_v$ remains $1/2$. On the
  other hand, if a neighbour does beep, $q$ decreases and $p_v$
  becomes $1/4$.  Hence $P(survival) \le (2\alpha^{ 3 l(q/2)- 2
    \log(1/2)-t+1}+ \alpha^{ 3 l(q/2) - 2 \log(1/4)-t+1})/5$ $\le
  \alpha^{ 3 l(q)- 2 \log(1/2)-t+1}(2+\alpha^{ 2 })/5$ which is at
  most $\alpha^{ 3 l(q) - 2 \log(1/2)-t}$ as claimed since
  $2+\alpha^{ 2 }<5\alpha^{-1} $.
\end{enumerate}
\qed

We can now conclude on the termination time of the algorithm.
\begin{corollary}\label{cor:MIS}
The number of phases taken by the MIS algorithm is less than $76\log n$ {\it w.h.p.}
\end{corollary}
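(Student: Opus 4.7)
The plan is to derive the corollary as a direct consequence of Theorem~\ref{th:MIS}, combined with a union bound over the $n$ vertices. The only quantity that requires attention before invoking the theorem is the initial value of $t_0$ at an arbitrary node $v$. Since every node starts with $p_v = 1/2$ and hence $q_v = \sum_{u \in N(v)} p_u = d_v/2 \le (n-1)/2 < n/2$, we have $l(q_v) = \max\{\log(5q_v),0\} \le \log(5n/2)$ and $-2\log p_v = 2$, so $t_0 \le 3\log(5n/2) + 2 = 3\log n + c$ for an absolute constant $c = 3\log(5/2) + 2 < 6$.

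Next, I would apply Theorem~\ref{th:MIS} with $t = 76\log n$ to each node. This yields $\Pr(v \text{ still active after } 76\log n \text{ phases}) \le \alpha^{t_0 - t} = 2^{(t_0 - t)/36}$ with $\alpha = 2^{1/36}$. Substituting the upper bound on $t_0$, the exponent satisfies $(t_0 - 76\log n)/36 \le (3\log n + c - 76\log n)/36 = -\tfrac{73}{36}\log n + \tfrac{c}{36}$, so the probability that $v$ is still active is at most $2^{c/36} \cdot n^{-73/36}$.

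Finally, a union bound over all $n$ nodes shows that the probability that some node has not yet terminated after $76\log n$ phases is at most $n \cdot 2^{c/36} \cdot n^{-73/36} = O\!\left(n^{-37/36}\right) = o(n^{-1})$, which is exactly the paper's notion of high probability. Hence with high probability the algorithm terminates in fewer than $76\log n$ phases, as claimed.

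There is no real obstacle here: Theorem~\ref{th:MIS} and Lemma~\ref{lem:MIS} do all the heavy lifting. The main thing to verify is that the numerical constant $76$ has been chosen tightly: since the initial $t_0$ grows like $3\log n$, and $\alpha^{-1} = 2^{-1/36}$, one needs $76$ rather than $75$ to make the exponent in $n^{-(76-3)/36} = n^{-73/36}$ strictly less than $-2$, which is what lets the union bound over $n$ vertices give $o(n^{-1})$ rather than merely $O(n^{-1})$.
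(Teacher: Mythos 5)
Your proof is correct and follows essentially the same route as the paper: bound the initial $t_0$ by $3\log n + 6$ using $p_v = 1/2$ and $q_v < n/2$, apply Theorem~\ref{th:MIS} with $t = 76\log n$ to get a per-node survival probability of $o(n^{-2})$, and finish with a union bound. Your closing remark on why $76$ (rather than $75$) is needed is also accurate.
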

\begin{proof}
Since initially $p_v=1/2$ and $q_v <n/2$ where the graph has $n$
nodes, we conclude that $t_0 < 3\log(5n/2)-2\log(1/2)<3\log
n+6$ so that after $t=76 \log n$ phases, every node $v$ has
probability at most $\alpha^{-73 \log n + 6} = o(n^{-2})$ of still being active. A final union bound extends this to all the nodes with probability $1-o(n^{-1})$.\qed
\end{proof}

\subsection{Colouring without knowledge}
\label{sec:analysis-colouring}

In this algorithm, when a node produces an exclusive beep, it takes the current round number as a colour and terminates (see Algorithm~\ref{algo:colouring} for details). Unlike the MIS algorithm, the neighborhood of this node keeps contending afterwards, which leads to a longer execution time and (at least in our analysis) a dependency on the maximum degree $\Delta$. Hence, we prove that this algorithm terminates within $O(\Delta + \log n)$ phases with probability $1-o(n^{-1})$. Finally, we prove a matching $\Omega(\Delta + \log n)$ lower bound that establishes that our algorithm (and analysis) is optimal.

\begin{theorem}
\label{theorem::analyse_coloration}
For any graph $G = (V, E)$ with $|V|=n$  
and maximum degree $\Delta$, Algorithm~\ref{algo:colouring} colours all the nodes within $O(\Delta +\log n)$ phases with probability $1 - o\left(n^{-1}\right)$.
\end{theorem}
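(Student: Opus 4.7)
The plan is to fix an arbitrary node $v$ and bound the probability that $v$ is still uncoloured after $O(\Delta + \log n)$ phases by $o(n^{-2})$; a union bound over the $n$ vertices then yields the theorem. The starting observation is that the beeping dynamics of Algorithm~\ref{algo:colouring} are \emph{identical} to those of the MIS algorithm analysed in Theorem~\ref{th:MIS}: in both, a node beeps with an adaptive probability $p$ doubled or halved according to the feedback of the previous phase. The only difference is the termination rule: in MIS a node exits as soon as \emph{some} exclusive beep occurs in its closed neighbourhood, whereas in colouring a node exits only when \emph{it itself} produces an exclusive beep. Let $A_v(t) \subseteq \overline{N}(v)$ denote the set of still-uncoloured nodes in $\overline{N}(v)$ at phase $t$; then $|A_v(t)|$ is non-increasing, decreases by exactly one whenever a node of $A_v(t)$ beeps exclusively, and $v$ becomes coloured no later than the phase in which $|A_v(t)|$ reaches zero. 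Hence at most $|\overline{N}(v)| \le \Delta+1$ exclusive beeps in $\overline{N}(v)$ are needed to colour $v$, and the core of the proof is to bound the number of phases needed to produce that many exclusive beeps.

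For this, I would adapt Lemma~\ref{lem:MIS} to obtain the stronger statement that, after an initial warm-up of $O(\log n)$ phases (long enough for the MIS-style bound $\alpha^{t_0 - t}$ to fall below $n^{-2}$), the conditional probability of at least one exclusive beep occurring in $\overline{N}(v)$ during any subsequent phase is bounded below by a positive constant $c > 0$, uniformly in $|A_v(t)|$. The intuition is that the doubling/halving rule drives the local sum $\sum_{u \in A_v(t)} p_u$ toward $\Theta(1)$, which is precisely the regime in which an exclusive beep occurs at a constant rate per phase. Once this per-phase lower bound is in hand, an Azuma-type concentration argument applied to the supermartingale counting the deficit of exclusive beeps (in the same spirit as the MIS analysis) shows that, for $t = c'(\Delta + \log n)$ with a suitable constant $c'$, the number of exclusive beeps in $\overline{N}(v)$ is at least $\Delta+1$ with probability $1 - o(n^{-2})$, which is what we need.

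The main obstacle will be rigorously establishing the constant per-phase lower bound and showing that it is preserved throughout the execution: the probabilities $p_u$ for $u \in A_v(t)$ are correlated through their shared history of beeps and silences, so consecutive phases are far from independent. I expect the cleanest route to be to replace the potential $3 l(q) - 2\log p$ used in Lemma~\ref{lem:MIS} by an augmented potential that also tracks $|A_v(t)|$, and to mirror the case analysis of that lemma to show that the augmented potential drops in expectation by at least a constant per phase in every regime of $(p,q,|A_v(t)|)$. With such a potential in hand, the remainder of the proof is a largely mechanical transcription of the MIS argument, with the only new ingredient being the extra $O(\Delta)$ term coming from the fact that $\Delta + 1$ rather than a single exclusive beep is required in $\overline{N}(v)$.
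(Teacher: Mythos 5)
Your overall architecture (per-node failure probability $o(n^{-2})$ followed by a union bound) matches the paper's, and your \emph{fallback} plan---an additive potential tracking $p$, $q$ and the number of uncoloured nodes around $v$, shown to decrease by a constant in expectation each phase and then fed to an Azuma-type inequality---is essentially the route the paper actually takes: it uses $M=-\log p+f(q)+10d$ with $f$ a capped, piecewise-linear version of $2\log 4q$, proves a mean decrease of at least $1$ per phase, and concludes from $M_0=O(\Delta)$. The genuine gap is in your \emph{primary} route: the claim that after an $O(\log n)$ warm-up the conditional probability of an exclusive beep in $\overline{N}(v)$ is bounded below by a positive constant in every subsequent phase, uniformly. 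This is false as a per-phase statement. A neighbour $u$ of $v$ beeps exclusively only if none of \emph{$u$'s} neighbours beep, and $q_u$ is governed by contention outside $\overline{N}(v)$ over which you have no control; moreover the ``good regime'' is not absorbing---after a stretch of silence all local probabilities double and $q$ can climb back up, and after a stretch of collisions all local probabilities can be driven very low, so that for many consecutive phases essentially nothing useful happens in $\overline{N}(v)$. The constant rate of exclusive beeps holds only in an amortised sense, which is exactly what the potential formulation buys and why the paper does not attempt the per-phase version. (Relatedly, Theorem~\ref{th:MIS} cannot serve as the warm-up: it bounds the time until $v$ becomes inactive under the MIS elimination rule, not the time to reach a favourable $(p,q)$ configuration, and the residual processes diverge after the first exclusive beep.)

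Two further points would need to be supplied to complete the fallback. First, Azuma requires bounded differences, but the potential can drop by an unbounded amount in one phase (several non-adjacent neighbours of $v$ may beep exclusively simultaneously, so your $|A_v(t)|$ can drop by more than one); the paper handles this by introducing a dominating variable $M^*$ whose per-phase change is truncated to lie in $[-11,3]$. Second, the case analysis is not a ``mechanical transcription'' of Lemma~\ref{lem:MIS}: the MIS bound is a multiplicative induction with no concentration inequality, whereas here the correlations among the events ``no neighbour of $u_i$ beeps'' are tamed by a separate extremal argument---one shows the mean decrease of $M^*$ is minimised either when the probability of a degree decrease is at least $2/5$ or when every $u_i$ is inhibited by some external neighbour, by conceptually attaching extra external nodes to each $u_i$. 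You correctly identify this dependency issue as the main obstacle, but the proposal does not contain the argument that resolves it.
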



\paragraph{Outline of the proof.} 
The process is first understood from an inductive point of view ({\em i.e.} from one phase to the next), then the overall time complexity is derived.
More precisely, we first define and analyse a measure $M$ of the distance from an arbitrary configuration of these values to the favorable case where $p=1/2$ and $q\le1/2$ (or $d=0$). 
Intuitively, given a node $v$, we expect both $p$ and $q$ to decrease initially until $q<1/2$,
after which $p$ will re-ascend until it is at least close to $1/2$
and then $d$ will start to descend. 
The initial value of $M$ is chosen in such a way that it decreases at least by $1$ on average in every phase. After this favorable configuration has been reached, we show that it takes a certain (logarithmic) number of further phases to terminate at $v$ with probability
$o(n^{-2})$ and thus everywhere with probability $o(n^{-1})$.

\subsubsection{6.3.1~~~From one phase to the next}~\smallskip

We define a measure $M$ of the distance from a given situation to the goal
where $p=1/2$ and $q_v\le1/2$ (or $d=0$). We will prove that $M=-\log(p) + f(q) +10d$ is a sufficient value, 
where $f$ is the function defined as follows: If $q \le 1$, then $f(q)=4q$. Otherwise, it is the piecewise linear approximation to $2 \log 4q$, i.e. $f$ is interpolated linearly between 
$f(2^i)=2i+4$ and $f(2^{i+1})=2i+6$. This particular definition guarantees the following convenient properties:
\begin{itemize}
\item $f(q)$ is continuous for $q>0$,
\item except at powers of $2$, $f$ is differentiable with derivative $\le~4$,
\item $f(q)-f(q/2)~=~2$ for $q\ge 1$,
\item $f(q)-f(q/2)~=~2q$ for $q\le 1$.
\end{itemize}
The goal is to show that in any phase, the mean decrease in $M$ is at least $1$.
Thus, after a number of phases equal to the initial $M$ ($\le$ $1+2\log(2d)+10d$),
$M$ is reduced on average
to $0$ unless the algorithm has already terminated at $v$. 

In fact, $M$ can decrease indefinitely in one phase (because two or more of $v$'s neighbours beep exclusively in the phase) and we want to apply later a theorem on martingales with bounded variation. Accordingly, we define another random variable $M^*$ which dominates $M$ and has the desired properties. The r.v. $M^*$ is initially equal to $M$ but its changes may be slightly different in the following ways:
\begin{itemize}
\item if $M$ decreases by more than $11$, then $M^*$ decreases by exactly $11$;
\item if $v$ beeps exclusively, $M^*$ is decreased by just $10$
whatever the values of $p$ at $v$ and its neighbours;
\item in a phase where $v$ has already terminated, $M^*$ is decreased by $1$.
\end{itemize}
This ensures that:
\begin{itemize}
\item  if the algorithm has not terminated at $v$, $M^*\ge M$, meaning that $M^*$ dominates $M$;
\item if $M^* \le 0$, the algorithm has terminated at $v$;
\item $M^*$ decreases at each phase by a value in $[-3\ldots 11]$
(since $\log p$ cannot decrease by more than $1$ and $f(q)$ cannot increase by more than $2$).
\end{itemize}

\paragraph{Managing dependencies among neighbors.}

Given a node $v$, we denote by $u_i$ ($1\le i \le d$) its neighbours in the current residual graph. 
In a given phase any $u_i$ has a well defined probability $a_i$ that none of its neighbours beep.
These probabilities are far from being independent. We will argue that,
except for cases where $p_{dec}$ (the probability of a decrease in $d$) is at least $2/5$,
the average decrease in $M^*$  is always minimised when all $a_i=0$.

Consider a situation where $p_{dec}<2/5$ and some $a_i>0$. We can decrease $a_i$
to $0$ with no change to the other $a_j$ by adding an infinite number of nodes
adjacent to $u_i$ but to no other node in $\overline{N}(v)$.
This will change the average increase/decrease in $q$ and $d$;
$q_i$ will be halved instead of doubled with probability $a_i$, decreasing $q$
by $3q_i/2$ and so decreasing $f(q)$ by at most $6a_iq_i$ on average.
The probability of a decrease in $d$ is decreased by $a_i$ times the probability
that $u_i$ beeps and no other $u_j$ beeps exclusively. This last probability
is the product of $q_i$ and the conditional probability that no other $u_j$ beeps exclusively
given that $u_i$ beeps. But,
since the probabilities of $u_i$ beeping and of some $u_j$ ($j\not=i$) beeping exclusively
are negatively correlated or independent,
this conditional probability is at most the
unconditional probability that no $u_j$ ($j\not= i$) beeps exclusively and so greater than
the probability that no $u_j$ beeps exclusively, namely $1-p_{dec} \ge 3/5$,
giving an average decrease in $d$ (respectively $M^*$)
reduced by more than $3a_iq_i/5$ (resp. $6a_iq_i$).

Thus the decrease in the measure is decreased more
by the $d$ component than it is increased by the change in $q$.

Repeating this process at most $d$ times we arrive at a situation with a smaller
mean decrease in $M^*$ than the initial one and either $p_{dec}\ge 2/5$ or all $a_i=0$
so, to lower-bound the decrease in $M^*$, we need only consider such situations.

\paragraph{The mean decrease in $M^*$.}

We consider cases depending on the value $q$.
First note that if $p_{dec}\ge 2/5$,
the mean decrease in $M^*$ is at least $10(2/5)-2-1=1$.
So in the other cases we suppose that all $a_i=0$ so that $q$ is halved.

In the case where $q<1$, we need to consider what happens when no $u_i$ beep.
If $p<1/2$ this is that $p$ increases, decreasing $M^*$ by $1$; if $p=1/2$ it is that,
with probability $1/2$, $v$ beeps exclusively so that
$d$ decreases by $1$, decreasing $M^*$ by $10$ so that on average $M^*$ decreases by $5$.
Accordingly, we suppose that the former happens.
\begin{itemize}
\item {$q \ge 1$}: 
$q$ decreases to $q/2$, reducing $f(q)$ by $2$ and $\log p$ can decrease by at most $1$
so that $M^*$ is decreased by at least $2-1 = 1$.
\item {$q < 1$}: 
$q$ decreases to $q/2$, decreasing $f(q)$ by $2q$ while $p$ doubles with probability
at least $1-q$ (and halves with probability at most $q$).
This gives a mean decrease in $M^*$ of at least $2q +(1-2q) = 1$.
\end{itemize}

\subsubsection{6.3.2~~~Overall time complexity}~\smallskip

We define the sequence of r.v.'s $(M_k)_{0\leq k\leq t}$ as follows $M^*_0= M_0$ and 
for any $k\geq 1$, $M_k$ is the value of $M^*$ after time $k$. We also define the sequence $(G_k)$ of residual graphs, where $k$ is the phase number.

Then for any $k\geq 1$:
\begin{eqnarray}
{\mathbb E}\left( M_k\mid G_1,G_2, \cdots,G_{k-1}\right) \leq M_{k-1} - 1.
\end{eqnarray}
Hence, $(M_k)_{k\geq 0}$ is a super-martingale with respect to $(G_k)_{k\geq 0}$. 

We define the  r.v. $D_k = M_k - M_{k-1}$ for any $k\geq 1$ and we denote $\mu = {\mathbb E}\left(D_k\right)$. We also introduce the r.v. $D'$ so as to retain the range of possible values of $D$ but
have mean exactly $-1$:
$$
D'_k = -\frac{4}{\mu-3}D_k + \frac{3\mu+3}{\mu-3}. 
$$
Then, it is easy to see that ${\mathbb E}\left(D'_k\right) = -1$ and ${\mathbb P}r\left(-11\leq D'_k\leq 3\right) =1$.

Now, define the r.v. $(M'_k)_{k\geq 0}$ as follows: $M'_0 =M_0$ and for any $k\geq 1$, $M'_k = M'_{k-1}+D'_{k} +1$. Then $(M'_k)_{k\geq 0}$ is a martingale with respect to $(G_k)_{k\geq 0}$.   

We apply Theorem 18 of \cite{AK67} to our martingale
$M'_t$ with expectation $M_0$.
Since the increments $(D'_k+1)_{k\geq 0}$ are in $[-10..4]$ and have mean $0$, their
variance is upper bounded by the case of a distribution with
values $-10$ and $4$ with probabilities $2/7$ and $5/7$ respectively,
giving variance of $40$ and maximum discrepancy from the mean of $10$.
Applying the theorem with $t=2M_0+174\ln n$ and $\lambda=t-M_0=M_0+174\ln n$,
we see that ${\mathbb P}r(M_t \geq 0)$ is less than ${\mathbb P}r\left( M'_t\geq t\right)$ which is at most:
$$
e^{\left(-\lambda^2/2(40t+10\lambda/3)\right)}.
$$ 
We have $\lambda=t-M_0$ so that $\lambda^2 > t(t-2M_0) = 174 \ln t$ and also $\lambda=M_0+174 \ln n > 174\ln n$
so that $\lambda^2 > 174 \lambda \ln n$.
Adding these two with weights of $6/13$ and $1/26$ gives
$$\lambda^2/2 > 174 \ln n(6t/13 + \lambda/26)$$
which gives
$$\lambda^2/2(40t+10\lambda/3) > 261/130 = 2+1/130$$
so that ${\mathbb P}r(M_t \geq 0) = o(n^{-2})$.

Then taking $t=2M_0+174\ln n$ is sufficient. Since $M_0=M \le 1+2\log(2d) + 10d$, taking $t=20\Delta + 180\ln n$ proves Theorem~\ref{theorem::analyse_coloration}
(allowing for the mixture of bases of the logarithms).\qed

\subsubsection{6.3.2~~~$\Omega(\Delta + \log n)$ Lower bound}~\medskip\\
We now establish that $\Omega(\Delta + \log n)$ slots are actually required for the colouring problem in the class of Las Vegas beeping algorithms. On the one hand, it is already known that $\Omega(\log n)$ rounds are needed to colour the nodes of a ring in the synchronous message passing model with constant size messages~\cite{KOSS}. Since this model can trivially simulate any of the beeping models, the bound applies. 
We now establish that $\Omega(\Delta)$ are needed as well in an infinite family of graphs. More precisely, we show that $\Omega(n)$ 
slots are required for colouring the  complete graph $K_n$
with a Las Vegas Algorithm even in $B_{cd}L_{cd}$.

\begin{lemma}\label{lowerbound}
Colouring $K_n$ with a Las Vegas beeping algorithm takes $\Omega(n)$ slots.
\end{lemma}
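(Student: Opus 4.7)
The plan is to combine a symmetry/anonymity argument with the elementary fact that in any slot of $K_n$ at most one node can beep alone. For any node $v$, I will call the event that $v$ beeps in slot $t$ and is the only node of the whole graph to do so a \emph{solo beep by $v$}; in the $B_{cd}L_{cd}$ model $v$ detects it directly through the outcome ``alone''. Trivially at most one solo beep can occur per slot.

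The key claim will be that every execution of any Las Vegas colouring algorithm on $K_n$ contains at least $n-1$ slots carrying a solo beep. The justification is as follows. Since the nodes are anonymous and $K_n$ is vertex-transitive, if two nodes $u$ and $v$ happen to draw the same sequence of random bits, they perform identical actions in every slot, hear the same global beep count, and therefore accumulate identical views; they would consequently commit to the same colour, violating Las Vegas correctness. Hence $v$ may commit only at a slot at which its own view already \emph{forces} every other node to have a different coin sequence. In $B_{cd}L_{cd}$, where a listener observes only $\{\text{silence}, \text{one beep}, \text{multiple}\}$ and a beeper only $\{\text{alone}, \text{collision}\}$, the only way for $v$'s view to force uniqueness of its coin sequence is either (i) that $v$ itself has solo-beeped at some slot (the ``alone'' outcome implies that every other node listened in that slot, distinguishing $v$ from all of them) or (ii) that $v$ has witnessed enough solo beeps by others to rule each of them out one by one. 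Either way $n-1$ solo beeps attributable to $n-1$ distinct nodes are needed, which by the ``at most one per slot'' fact forces $T \ge n-1 = \Omega(n)$.

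The main technical step will be to establish the forcing claim precisely. I plan to show that if $v$'s view contains no ``alone'' outcome (and not enough ``one beep'' outcomes to distinguish every other node), then for any realisation $\omega$ compatible with $v$'s view there exists a modified realisation $\omega'$ producing the same view of $v$ in which some other node $u$ uses exactly the same coin sequence as $v$. This is done by copying $v$'s coin sequence into $u$ and adjusting the coins of the remaining $n-2$ nodes slot by slot so as to preserve the global beep count perceived by $v$, which is always feasible in the absence of an ``alone'' outcome since the required number of extra or missing beepers among the other $n-2$ nodes always lies in $[0,n-2]$. Consequently the conditional probability (given $v$'s view) that some $u \ne v$ shares $v$'s entire coin trajectory is strictly positive, and a commit by $v$ at that moment would give a strictly positive probability of invalid output, contradicting the Las Vegas guarantee.
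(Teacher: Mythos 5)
There is a genuine gap at the central step of your argument. Your whole bound rests on the claim that a node $v$ whose view contains no ``alone'' outcome cannot safely commit, and you propose to prove it by taking any realisation $\omega$ compatible with $v$'s view and ``adjusting the coins of the remaining $n-2$ nodes slot by slot so as to preserve the global beep count perceived by $v$'' after copying $v$'s coins into $u$. But those $n-2$ nodes are running the algorithm: their beep/listen decisions are functions of their own coins \emph{and of their evolving views}, and the moment you change $u$'s behaviour you change what every other node hears, hence its state, hence the set of actions it can take in all subsequent slots. You cannot simply dial the number of beepers among the others up or down to match $v$'s perceived category; you would have to exhibit coin assignments under which the algorithm itself produces the required counts, and nothing in the sketch guarantees such assignments exist. (A secondary problem: your case (ii) is vacuous under anonymity --- if $v$ hears a single beep in $n-1$ different slots it cannot attribute them to $n-1$ \emph{distinct} neighbours; the same node may have solo-beeped every time while the other $n-2$ nodes cloned $v$ throughout. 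This does not break your counting, since it only removes one of the two certification routes, but it shows the ``forcing'' analysis has not actually been carried out.)

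The paper's proof sidesteps exactly this difficulty, and the device it uses is the idea your argument is missing. Instead of trying to plant a clone \emph{inside} $K_n$ (which perturbs the dynamics), it observes that in an execution of fewer than $n$ slots some node $v$ never beeps exclusively, and then adds an \emph{external} clone $v'$ to form $K_{n+1}$, letting $v'$ mimic $v$'s actions. Because $v$ never beeped alone, every beep of $v'$ lands on a slot that already carried at least two beeps, so no node's observation changes in any model (two beeps are indistinguishable from three, ``not alone'' stays ``not alone''); the original execution therefore extends verbatim to a positive-probability execution on $K_{n+1}$ in which the adjacent nodes $v$ and $v'$ receive the same colour. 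This yields the $\Omega(n)$ bound in a few lines, with no need to reason about conditional distributions given $v$'s view. If you want to stay inside $K_n$, the natural clone argument (as in the paper's Lemma~\ref{lem:wheel}) only gives $\Pr(T>t)\geq 2^{-t}$, which is far too weak for an $\Omega(n)$ bound; so the passage to $K_{n+1}$ is not a cosmetic choice but the key idea.
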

\begin{proof}[By contradiction]
Let $\cal A$ be such an algorithm and let $\cal E_{A}$ be an execution of $\cal A$ that terminates in less than $n$ slots in the complete graph $K_n$. Then it holds that at least one node, say $v$, never beeped exclusively. Let $\cal E_{A}'$ be another execution of $\cal A$, this time in the complete graph $K_{n+1}$ composed of the same nodes plus $v'$. Let all the nodes behave as they did over $\cal E_{A}$ and let $v'$ act exactly like $v$. Since $v$ never beeped alone in $\cal E_{A}$, the same is true in $\cal E_A'$ and for $v'$, making the two executions indistinguishable (two beeps are indistinguishable from three). Hence, the nodes in $\cal E_A'$ terminate as in $\cal E_A$, 
and $v$ has the same colour as $v'$ which is a contradiction.
\end{proof}


\begin{theorem}
  Las Vegas colouring takes $\Theta(\Delta + \log n)$ slots in the strongest beeping model ($B_{cd}L_{cd}$).
\end{theorem}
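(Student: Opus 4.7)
The plan is to combine the three ingredients already established in the section. I have an upper bound coming from Algorithm~\ref{algo:colouring} and two lower bounds (from the message passing simulation and from Lemma~\ref{lowerbound}), and I just need to glue them together into a tight $\Theta(\Delta + \log n)$ bound in the $B_{cd}L_{cd}$ model.

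For the upper bound, I would observe that Algorithm~\ref{algo:colouring} is written in $B_{cd}L$ which is a restriction of $B_{cd}L_{cd}$ (any algorithm that uses only $B_{cd}$ collision detection can be executed verbatim in the stronger model, ignoring the extra $L_{cd}$ primitive). By Theorem~\ref{theorem::analyse_coloration}, it colours every vertex within $O(\Delta + \log n)$ slots with probability $1 - o(n^{-1})$. Because the algorithm is Las Vegas and the tail decays faster than $1/n$, the standard restart argument (cap the running time at $c(\Delta + \log n)$ slots and re-run on failure) yields expected slot complexity $O(\Delta + \log n)$, so the upper bound holds in the $B_{cd}L_{cd}$ model as well.

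For the lower bound I would combine the two ingredients already at our disposal. First, $\Omega(\log n)$ follows from the bit-size message passing lower bound of~\cite{KOSS} on rings, because any $B_{cd}L_{cd}$ algorithm can be simulated by a synchronous message passing algorithm using constantly many bits per round (the remark in Section~\ref{sec:definitions}), so the message passing lower bound transfers. Second, $\Omega(\Delta)$ follows from Lemma~\ref{lowerbound} applied to $K_{\Delta+1}$: that lemma requires $\Omega(\Delta+1)=\Omega(\Delta)$ slots on the complete graph, and its indistinguishability argument (two beeps cannot be distinguished from three) is valid in $B_{cd}L_{cd}$ as well.

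It remains to realise both lower bounds simultaneously in a single graph of $n$ vertices with maximum degree exactly $\Delta$. I would use the disjoint union $G = K_{\Delta+1} \sqcup C_{n-\Delta-1}$, where $C_m$ is a cycle on $m$ vertices (if $\Delta \ge n/2$, then $\Delta = \Omega(\log n)$ and the clique component alone suffices; otherwise the cycle has $\Theta(n)$ vertices and imposes $\Omega(\log n)$ while the clique imposes $\Omega(\Delta)$). A Las Vegas algorithm must colour both components, hence must take $\Omega(\Delta)$ slots and $\Omega(\log n)$ slots on $G$, giving $\Omega(\max(\Delta,\log n)) = \Omega(\Delta + \log n)$. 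Matching this with the upper bound proves the theorem. I do not foresee a genuine obstacle here; the only subtlety is confirming that the $\Omega(\log n)$ bound for ring colouring survives when the ring is embedded inside a larger disconnected graph, which is immediate since the other component cannot send any information to the ring.
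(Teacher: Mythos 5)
Your proposal is correct and follows essentially the same route as the paper: the upper bound is Theorem~\ref{theorem::analyse_coloration} (which holds a fortiori in $B_{cd}L_{cd}$ since $B_{cd}L$ is a weaker model), and the lower bound combines the $\Omega(\log n)$ ring bound of~\cite{KOSS} with the $\Omega(n)$ bound on $K_n$ from Lemma~\ref{lowerbound}. Your explicit disjoint-union gadget realising both lower bounds simultaneously in a single graph with prescribed $n$ and $\Delta$ is a detail the paper leaves implicit, but it is a sound and slightly more careful way to justify the combined $\Omega(\Delta+\log n)$ statement.
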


\subsection{$(K+1)$-Colouring knowing $K \ge \Delta$.}
\label{sec:analysis-colouring-K}

We analyse here the variant defined at the end of Section~\ref{sec:colouring-K}, which corresponds partially to Algorithm~\ref{algo:colouring-K}, with a different kind of adaptive probability. Recall that the algorithm consists for a node to take the current round number (modulo $K+1$) as colour when it produces an exclusive beep (then it terminates). The adaptive probability is managed as follows. We call a {\em cycle} a sequence of $K+1$ phases. In the beginning of each {\em cycle}, every node updates its beeping probability $p$, setting it to $1/(2|Colours|)$, where $|Colours|$ is the number of remaining colours (that is, $K$ minus the number of colours already taken by a neighbor). In this context, we prove that the number of phases is $O(K\log n)$ with probability $1-o(n^{-1})$.

\begin{theorem} 
Let $G$ be a graph of size $n$ and $K \ge \Delta$ an upper bound on the maximum degree, then this algorithm computes a $(K+1)$-colouring of $G$ within 
$O(K \log n)$ phases with probability $1-o(n^{-1})$.
\end{theorem}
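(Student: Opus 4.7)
The plan is to prove a per-cycle success lemma: conditional on any history and on $v$ being active at the start of a cycle, the probability that $v$ becomes coloured during that cycle is at least an absolute constant $c_0 > 0$. Iterating this bound over $T = \Theta(\log n)$ cycles gives per-node failure probability $o(n^{-2})$, and a union bound over the $n$ nodes yields total failure probability $o(n^{-1})$. Since each cycle consists of $K+1$ phases, this produces $O(K \log n)$ phases overall.

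Fix the history at the start of a cycle and fix an active node $v$; set $k_v = |Colours_v|$ and $p_v = 1/(2 k_v)$. Two structural observations drive the argument. First, the colours removed from $Colours_v$ correspond to distinct coloured neighbours of $v$, so $v$ has at least $K + 1 - k_v$ coloured neighbours and therefore at most $\Delta - (K + 1 - k_v) \le k_v - 1$ active neighbours (using $\Delta \le K$). Second, every active neighbour $u$ satisfies the key identity $p_u \cdot k_u = 1/2$, which will cancel several factors below. Moreover, within a single cycle the beep decisions of distinct nodes and distinct phases are mutually independent, because each node uses a fixed probability throughout the cycle.

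Let $X(v)$ be the number of phases of the cycle in which $v$ beeps and $X_e(v)$ the number of exclusive beeps of $v$. By linearity $E[X(v)] = k_v p_v = 1/2$. For each active neighbour $u$, let $Z(v,u)$ denote the number of phases in which $v$ and $u$ both beep; independence gives $E[Z(v,u)] = p_v\, p_u |Colours_v \cap Colours_u| \le p_v\, p_u\, k_u = p_v/2$, and summing over the at most $k_v - 1$ active neighbours gives $E[\sum_u Z(v,u)] \le (k_v-1)/(4 k_v) \le 1/4$. Since every non-exclusive beep of $v$ contributes at least one unit to $\sum_u Z(v,u)$, we get $X_e(v) \ge X(v) - \sum_u Z(v,u)$, hence $E[X_e(v)] \ge 1/4$.

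The hard part is turning this first-moment estimate into a genuine probability bound, for which I use a second-moment argument. Writing $X_e(v) = \sum_{c \in Colours_v} \mathbf{1}_{A_c}$ with $A_c$ the event that $v$ beeps exclusively in the phase with counter $c$, we have $E[X_e(v)^2] = E[X_e(v)] + \sum_{c \ne c'} P(A_c \cap A_{c'})$. Phase-wise independence bounds each pairwise term by $p_v^2$, so the off-diagonal sum is at most $k_v (k_v - 1) p_v^2 \le 1/4$, giving $E[X_e(v)^2] \le 3/4$. Cauchy--Schwarz in the form $E[X_e(v)]^2 \le E[X_e(v)^2] \cdot P(X_e(v) \ge 1)$ then yields $P(X_e(v) \ge 1) \ge (1/4)^2/(3/4) = 1/12$. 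Since this bound is valid conditionally on any prior history, the probability that $v$ is still active after $T$ cycles is at most $(11/12)^T$; taking $T = \lceil c \log n \rceil$ for a sufficiently large constant $c$ makes this $o(n^{-2})$, and a union bound over the $n$ nodes completes the proof.
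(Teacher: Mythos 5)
Your proof is correct, and while its overall skeleton matches the paper's (a constant-progress-per-cycle lemma, iterated over $\Theta(\log n)$ cycles, then a union bound over nodes), the per-cycle lemma itself is proved by a genuinely different technique. Both arguments rest on the same two structural facts: an active node with $k_v$ remaining colours has at most $k_v-1$ active neighbours, and each active neighbour $u$ contributes total beeping probability at most $p_u k_u = 1/2$ over the cycle. The paper then bounds the survival probability of a cycle \emph{multiplicatively}, writing the per-phase survival as $(1-\tfrac{1}{2C_i})(1+\sum_u p_u(i,c)/(2C_i-1))$ and collapsing the product over the $C_i$ relevant phases to $\exp(-1/2+1/4)=e^{-1/4}$, whereas you run a first-/second-moment (Paley--Zygmund) argument on the number $X_e$ of exclusive beeps to get $\Pr(X_e\ge 1)\ge 1/12$. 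The paper's route is shorter and yields a better constant ($9\ln n$ cycles versus roughly $16\log n$ here); yours is arguably more robust since it never needs the exact factorisation of the survival probability. One step you should tighten: the assertion that beep decisions of distinct nodes and phases are ``mutually independent'' within a cycle is not literally true of the process, because a node that takes a colour in phase $c$ falls silent for the rest of the cycle, and whether this happens depends on other nodes' beeps. The standard repair is to pre-draw an independent coin $B_{u,c}$ for each node $u$ and each $c\in Colours_u$ at the start of the cycle and note that all your estimates are monotone under this coupling: $E[Z(v,u)]$ and $\Pr(A_c\cap A_{c'})$ can only decrease when nodes fall silent, and the coin event ``$B_{v,c}=1$ and $B_{u,c}=0$ for every relevant neighbour $u$'' still forces $v$ to be coloured by the end of the cycle. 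The paper avoids this issue by defining $p_u(i,c)$ directly as the history-dependent probability that $u$ beeps for colour $c$ and by maximising over the tree of possible executions; with the coupling made explicit, your argument is equally sound.
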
 
\begin{proof}

Let $P_k$ be the probability that
node $v$ survives uncoloured over $k$ cycles (where $k$ is unrelated to the bound $K$). We will use the following symbols recurrently, with given domains of definition:
\begin{itemize}
\item $i$ ranges over $1..k$,
\item $c$ ranges over the $C_i$ colours possible for $v$ at the start of cycle $i$,
\item $u$ ranges over the neighbours of $v$ still uncoloured at the start of cycle $i$,
\item $p_u(i,c)$ is the probability that $u$ beeps for colour $c$ in cycle $i$.
\end{itemize}
First we consider the probability $p$ that $v$ survives uncoloured in a single phase
using a colour $c \in Colours(v)$. Then:
\begin{eqnarray}
p & = & {\mathbb P}r\left(v{\rm~does~not~beep~at~colour~}c{\rm~in~cycle~}i\right)\nonumber\\
   & + & {\mathbb P}r\left(v{\rm~does~beep~and~some~neighbour~}u{\rm~also~beeps}\right),\nonumber
\end{eqnarray}   
but ${\mathbb P}r\left(v{\rm~does~beep}\right) = 1/2C_i$
and the beeping probabilities of $v$ and its neighbours
are independent giving:
\begin{eqnarray}
p & =  & \left(1-1/2C_i\right) + {\mathbb P}r\left({\rm some~neighbour~beeps}\right)/2C_i\nonumber\\
   & =    & \left(1-1/2C_i\right)\left(1+{\mathbb P}r\left({\rm some~neighbour~beeps}\right)/(2C_i-1)\right)\nonumber\\
   & \le  & \left(1-1/2C_i\right)\left(1+\sum_u p_u(i,c)/(2C_i-1)\right).\nonumber
\end{eqnarray}
After the first phase, $p_u(i,c)$ and $C_i$ are random variables
dependent on what has happened so far,
and we consider the tree of all possible executions up to $k$ cycles,
where each tree node has its own value of $p$.
It is easily shown by induction that
$P_k$ is upper bounded by the maximum over all paths in this tree
of the product of the values of $p$ along the path.
We fix a path which gives this maximum and
bound the product for this path.
We have the probability of surviving cycle $i$
 $\le  (exp(-1/2) \times \prod_c  (1+\sum_u p_u(i,c)/(2C_i-1)))$
 $\le  exp(-1/2 +  \sum_c \sum_u p_u(i,c)/(2C_i-1))$
and so
$P_k \le  exp(-k/2 + \sum_i \sum_c \sum_u p_u(i,c)/(2C_i-1))$.




In cycle $i$, $v$ has $C_i$ colours available and so has less than $C_i$ neighbours;
each neighbour $u$ has $\sum_c p_u(i,c) \le 1/2$, giving, for this cycle,
$\sum_u \sum_c p_u(i,c)/(2C_i-1)\le 1/4$ so that $\sum_i \sum_u \sum_c p_u(i,c)/(2C_i-1) \le k/4$.


Hence $P_k \le  exp(-k/4)$ and after 
$9\ln n$ cycles, $v$ has probability $o(1/n^2)$ of
remaining uncoloured and the graph has probability $o(1/n)$ of having any uncoloured
node.\qed
\end{proof}

\subsection{Two-hop variants of the algorithms}
As explained in Section~\ref{sec:algorithms}, the 2-hop variants of both colouring algorithms (with or without knowledge) come to the same essential operations as the 1-hop variant, but performed in the stronger model $B_{cd}L_{cd}$, in the square of the graph and using (a constant number of) additional slots in each phase for reporting peripheral collisions. This being said, the complexity of these algorithms remains essentially the same, replacing only the $\Delta$ term with $\Delta^2$ (or replacing $K$ by $K^2$), due to acting in the square of the graph. By an analogy already discussed in Section~\ref{sec:algorithms}, the resulting $O(\Delta^2 + \log n)$ complexity also applies to the computation of the degree. Finally, the same arguments also apply to the MIS algorithm, but in this case, since there is no dependency on $\Delta$, the complexity of the 2-hop variant remains $O(\log n)$. In fact, the complexity remains within the same complexity of $76 \log n$ without additional penalty, because the maximum number of 2-hop neighbors cannot exceed $n$ (which is the same bound as for the number of one-hop neighbors).

\section{Further Related Work}
\label{sec:related}

This section provides further related work on beeping algorithms and algorithms for radio networks.
As explained by Chlebus~\cite{C01}, in a radio network, a node can hear
a message only if it was sent by a neighbour and this neighbour was the only
neighbour that performed a send operation in that step. If no message
has been sent to a node then it hears the background noise. If a node
$v$ receives more than one message then we say that a collision occurred at the
node $v$ and the node hears the interference noise. If the nodes
of a network can distinguish the background noise from the interference noise,
then the network is said to be with collision detection, otherwise it is
without collision detection (see for example  the Wake-up problem, MIS problem or election in radio networks 
in \cite{GPP01,Moswa,CGK07,JK15,LMR07} where nodes cannot distinguish 
between no neighbour sends a message and at least
two neighbours send a message; see also the broadcasting problem
in radio network in \cite{GHK13} where nodes can distinguish between
no neighbour sends a message, exactly one neighbour sends a message and 
at least two neighbours send a message).
In this context, an efficient randomised
emulation of a single-hop radio network with collision detection on
multi-hop radio network without collision detection is presented and
analysed in \cite{BGI91}.
To summarise, detecting  a collision in a radio network is to be able to distinguish
between $0$ message and at least $2$ messages while detecting a collision
in the beeping model is to be able to distinguish between $1$ message
and at least $2$ messages.

Despite this difference, some of our algorithms use
similar ideas to those used for initialising a packet radio network
\cite{HNO99} or for election in a complete graph with wireless
communications \cite{BW12} (Algorithm 50, p. 132).  The impact of
collision detection is studied in \cite{Schneider10,KP13}, where it is
proved that performances are improved, and in certain cases the
improvement can be exponential.  The complexity of the conflict
resolution problem (where the goal is to let every active node use the
channel alone (without collision) at least once) is studied in
\cite{HuangM13} (they assume that nodes are identified), and an
efficient deterministic solution is presented and analysed.

Regarding the MIS and colouring problems, general considerations and many examples of Las Vegas distributed
algorithms  can be found in \cite{Peleg}.
The computation of a MIS has been the object of extensive research on
parallel and distributed complexity in the point to point message
passing model \cite{Alonco,Luby} \cite{Awerbuchco,Linial}; Karp and
Wigderson \cite{KarpW} proved that the MIS problem is in NC.
Some links with distributed graph colouring and some recent results on
this problem can be found in \cite{Kuhnco}.  The complexity of some
special classes of graphs such as growth-bounded graphs is studied in
\cite{Kuhnmo}.  Results have been obtained also for radio networks
\cite{Moswa}.  A major contribution is due to Luby \cite{Luby}.  He
gives a Las Vegas distributed algorithm.  The main idea is to obtain
for each node a {\it local total order} or a {local election} which
breaks the local symmetry and then each node can decide locally
whether it joins the MIS or not.  Its time complexity is $O(\log n)$
and its bit complexity is $O(\log^2 n).$ Recently, a Las Vegas
distributed algorithm has been presented in \cite{MRSZ11} which
improved the bit complexity: its bit complexity is optimal and equal
to $O(\log n)$ {\it w.h.p.}  
An experimental comparison between 
\cite{Luby} and \cite{MRSZ11} is presented in \cite{BK13}.
If we remove the constraint on the size of
messages or on the anonymity, recent results have been obtained for
distributed symmetry breaking (MIS or colouring) in
\cite{KothaP11,BarenE14}.
Afek et al. \cite{Afek13}, 
 from considerations concerning the development of
certain cells, studied the MIS problem in the discrete beeping model 
 $BL$
as presented in \cite{Cornejo10}. They consider, in particular,
the wake-on-beep model (sleeping nodes wake up upon receiving a beep) and
sender-side collision detection $B_{cd}L$: they give an $O(\log^2 n)$ rounds 
MIS algorithm. 
Jeavons et al. \cite{JSX16} present in the model $B_{cd}L$ 
a randomised algorithm
with feedback mechanism  whose expected time to compute a MIS is
$O(\log n)$.  



In the model of point to point 
message passing,
node colouring is mainly studied under two assumptions: (1) nodes
have unique identifiers, and more generally, they have an initial
colouring, and (2) every node has the same initial state and initially
only knows its own edges.  If the nodes have an initial colour, Kuhn
and Wattenhofer \cite{Kuhnco} have obtained efficient time complexity
algorithms to obtain $O(\Delta)$ colours in the case that every
node can only send its own current colour to all its neighbours.  In
\cite{Johansson}, Johansson analyses a simple randomised distributed
node colouring algorithm for anonymous graphs.  He proves that this
algorithm runs in $O(\log n)$ rounds {\it w.h.p.} on graphs
of size $n.$ The size of each message is $\log n,$ thus the bit
complexity per channel of this algorithm is $O(\log^2 n).$
The authors of~\cite{MRSZ10} present an optimal bit and time complexity Las Vegas distributed algorithm for colouring any anonymous graph in $O(\log n)$
bit rounds {\it w.h.p.} Finally, a greedy colouring algorithm is proposed in~\cite{JSX16} that extends the beeping MIS algorithm in a simple way, and shares most of its analysis.

In~\cite{Cornejo10}, Cornejo and Kuhn study the interval colouring problem:
an interval colouring assigns to each node an interval (contiguous fraction)
of resources such that neighbours do not share resources
(it is a variant of node colouring). They assume that each node
knows an upper bound of the maximum degree $\Delta$
of the graph. They present in the  beeping
model $BL$ a probabilistic  algorithm which never stops
and stabilises with a $O(\Delta)$-interval colouring  in
$O(Q \log n)$ slots.

Kothapalli et al. consider the family of anonymous rings and show in
\cite{KOSS} that if only one bit can be sent along each edge in a
round (point to point message passing model), 
then every Las Vegas distributed node colouring algorithm (in
which every node has the same initial state and initially only knows
its own edges) needs $\Omega(\log n)$ rounds {\it w.h.p.} to
colour the ring of size $n$ with any finite number of colours.
Kothapalli et al.  consider also the family of oriented rings and they
prove that the bit complexity in this family is $\Omega(\sqrt{\log
  n})$ {\it w.h.p.}

The authors of~\cite{FMRZ13} present and analyse Las Vegas distributed algorithms which compute a MIS or a maximal matching for anonymous rings (in the point to 
point message passing model).  Their bit complexity
and time complexity are $O(\sqrt{\log n})$ {\it w.h.p.}


Emek and Wattenhofer introduce in \cite{EmekW13}
a model for distributed computations which resembles the beeping model:
networked finite state machines (nFSM for short). This model enables the sending
of the same message to all neighbours of a node; however it is asynchronous,
the states of nodes belong to a finite set, the degree of nodes is bounded
and the set of messages is also finite. In the nFSM model they give a
2-hop MIS algorithm for graphs of size $n$ using a set of messages of size $3$
with a time complexity equal to $O({\log^2 n}).$










\section{Conclusion}
We presented in this paper a number of design patterns which make the design and analysis of beeping algorithms simpler. They also make more apparent the connections between several algorithms which seem at first unrelated. This was illustrated through a number of algorithms and analysis. In addition, we investigated the comparative cost of various beeping models and presented a canonical method for transforming Las Vegas algorithms in stronger models into Monte Carlo algorithms in weaker models.

\section*{Acknowledgments}
We thank the anonymous referees for their comments on an earlier version of this article, which helped us improve significantly the presentation of our results.

\end{document}